\documentclass[a4paper,11pt]{article}

\usepackage{Packages/math}
\usepackage{Packages/page_style}

\title{Ground state energy of the Bose--Hubbard model with large coordination number with a polaron-type quantum de Finetti theorem}

\author{
Shahnaz Farhat\thanks{Department of Mathematics, University of Tübingen, Auf der Morgenstelle 10, 72076 Tübingen, Germany. Email: \texttt{shahnaz.farhat@uni-tuebingen.de}}, 
Denis Périce\thanks{School of Science, Constructor University Bremen, Campus Ring 1, 28759 Bremen, Germany. Email: \texttt{dperice@constructor.university}, corresponding author.}, 
Sören Petrat\thanks{School of Science, Constructor University Bremen, Campus Ring 1, 28759 Bremen, Germany. Email: \texttt{spetrat@constructor.university}}
}

\begin{document}

\maketitle

\begin{abstract}
    We consider the ground state energy of the Bose--Hubbard model on a graph with large and homogeneous coordination number. In the limit of infinite coordination number, we prove convergence of the ground state energy to the minimizer of a mean-field energy functional. This functional is obtained by averaging the hopping term over the large number of connected sites, while the interaction energy is not averaged. Hence, the resulting mean-field description is in the strong coupling regime, and is expected to provide a qualitatively correct picture of the phase diagram of the Bose--Hubbard model for large enough coordination number. For our proof, we develop a new version of a de Finetti-type theorem, which we call the \emph{polaron-type quantum de Finetti theorem}, and which we expect to be a more broadly useful extension of existing quantum de Finetti results. Our theorem covers the case where the Hilbert space is a tensor product of some Hilbert space with a bosonic Fock space. This theorem is applied to the convergence of the ground state energy of the Bose--Hubbard model after reducing it to a polaron-type model.
\end{abstract}


\section{Introduction}

\subsection{Context}

The Bose--Hubbard model \cite{PhysRev.129.959} describes bosons on a lattice with on-site interactions and hopping between nearest neighbours. As a lattice model, it is amenable to rigorous analysis, while being rich in phenomenology. In particular, it is expected to exhibit a quantum phase transition between a Mott insulator and a superfluid state \cite{PhysRevB.40.546}. In the mathematical physics literature, there are only a few rigorous works on this type of phase transition. Without being complete, let us mention the following. The work \cite{Bru_Dorlas} considers the Bose--Hubbard model on a complete graph, which considerably simplifies the problem as the model is then symmetry under permutation of lattice sites. For this model, rigorous proofs on the thermodynamic behaviour are given, and the corresponding phase diagram is analysed. In \cite{ALSSY}, the authors prove Bose--Einstein condensation for the hard-core Bose--Hubbard model at half filling with periodic external potential in the regime of small external potential and low temperature, and the absence of it for large external potential or large temperature. In \cite{MottTransition}, bounds near the critical line of the Mott insulator phase are proven for the hard-core Bose--Hubbard model using rigorous perturbation theory.

In our work, we do not directly attempt to prove the quantum phase transition. Instead, we prove that the ground state energy converges to that of a mean-field model in the limit of infinite coordination number of the lattice. The mean-field model is obtained from averaging the hopping term, and hence the mean-field theory is a one-lattice site description. The corresponding lattice-site product states are known as Gutzwiller product states \cite{Gutzwiller,Rokhsar_Kotliar_1991}. In this mean-field theory one can compute the phase diagram with numerical and some analytical techniques, see Figure~\ref{fig:phase diagram}. It is expected that this mean-field phase diagram is qualitatively correct already for not too large coordination numbers, e.g., already in three dimensions for a cubic lattice with nearest-neighbour interaction \cite{PhysRevB.40.546}. This justifies our interest in the mean-field theory, and the need for a rigorous justification of the validity of the mean-field limit. Note that for the averaging we have to scale down the hopping amplitude with the inverse coordination number, hence our limit could be described as a ``weak-hopping limit''. The interaction is not scaled at all, and thus the mean-field theory is strongly coupled. This is in contrast to weak coupling mean-field limits for bosonic systems, which are not usually expected to exhibit phase transitions; see, e.g., \cite{benedikter_lec,Golse_2016} for reviews and \cite{MFBH} for a result on the Bose--Hubbard model. Note that to our knowledge this is the first proof of convergence of the ground state energy in the limit of large coordination number. Recently, we proved a related result \cite{DMFT_dynamics_2025} for the dynamics. In the physics literature, our approximation is well-known as a simple instance of the celebrated \emph{Dynamical Mean-field Theory} (DMFT), an extremely useful tool to study the dynamics of quantum lattice systems with large coordination number \cite{PhysRevB.40.546,PhysRevB.80.245110}. 

A very versatile and general tool for rigorously justifying mean-field limits is the quantum de Finetti theorem \cite{MR241992,MR397421}. For the justification of mean-field limits in quantum mechanics, it has for example been used in \cite{LEWIN2014570} to rigorously justify Hartree's mean-field theory for continuum bosons in the weak coupling limit. For further references, see the remarks \cite{Lewin_et_al_de_Finetti_2015}, the reviews \cite{Rougerie_notes,Rougerie_review_2021}, and \cite{christandl_2007} for a quantitative version. Many follow-up works used variants of the quantum de Finetti theorem in the derivation of mean-field-like limits in quantum mechanics, e.g., \cite{MNO} for a mixture of Bose gases, \cite{Nam_et_al_2016} for bosons in the Gross--Pitaevskii limit, and \cite{perice_2024} for 2D fermions in strong magnetic fields. The quantum de Finetti theorem states that reduced $k$-body density matrices of a symmetric (i.e., bosonic) state converge in the limit of large particle number to a convex combination of product states. Such a theorem can then be used in lower bounds for ground state energies. For example, for pair interacting particles, the reduced two-body density matrix can be approximated by a convex combination of product states, and a lower bound is obtained by concentration of this convex combination on the minimizers of the corresponding mean-field functional.

The quantum de Finetti theorem as outlined above does not directly apply to our problem of the Bose--Hubbard model with large coordination number. This is because our approximation is for reduced lattice-site density matrices, and the Bose--Hubbard model is not generally symmetric under the exchange of lattice sites. However, we can generate the Bose--Hubbard Hamiltonian through translations of a reduced Hamiltonian of just one lattice site interacting with its neighbouring sites. This reduced Hamiltonian is symmetric in the neighbouring sites. Hence, we prove here a new version of a quantum de Finetti theorem that applies to a tensor product of a Hilbert space (to describe the ``core'' lattice site) with a large symmetric tensor product of another Hilbert space (to describe the neighbouring sites). Such a structure is also encountered when considering a tracer particle coupled to a bath of bosonic particles, e.g., the polaron \cite{polaron_review}, hence we call the corresponding version of such a quantum de Finetti theorem a \emph{polaron-type quantum de Finetti theorem}. Note that we cover the case of infinite dimensional Hilbert spaces here, hence going beyond related finite dimensional versions of such theorems \cite{10.1063/1.2146188,christandl_2007}. We expect this theorem to be interesting in its own right, and regard it as the main technical novelty of the paper. We note that similar results for composite systems have recently been proven in \cite{FOR25}, where they have been applied to Nelson-type polaron models. This was based on the theorem from \cite{FLV88}, and our Theorem~\ref{th:QDF} is in the same spirit as such results. We remark that we explicitly construct the de Finetti measure in the finite dimensional case in Theorem~\ref{th:finite DF}.

\subsection{Model}\label{ssec:model}

We consider a sequence of graphs $(V_z, E_z)_{z\in \mbN}$ with vertex set $V_z$ and edge set $E_z$ with constant coordination number $z$, i.e., 
\begin{align*}
    \forall x \in V_z,\ \abs{V_z^x} \eqcolon z \quad \text{with} \quad V_z^x\coloneq \sett{y \in V \mid \sett{x, y} \in E_z}.
\end{align*}
We will consider the limit where $z \to \infty$. We first notice that 
\begin{align}
    \abs{E_z} = \frac{z}{2} \abs{V_z}, \quad \abs{V_z} > z. \label{eq:z/2}
\end{align}
\begin{remark}\label{rmk:ex graphs}
    Some examples of graphs are:
    \begin{itemize}
        \item The $d$-dimensional square lattice with periodic boundary conditions and length $L \in \mbN^*$, i.e., 
        \begin{align*}
            V_d \coloneq \prth{\mathbb{Z} \slash L \mathbb{Z}}^d,
        \end{align*}
        in the limit $d \to \infty$, with nearest neighbours as edges, so that $z = 2d$.
        \item The cubic lattice $V_3$ with connections inside large balls of radius $r$, so that
        \begin{align*}
            z \eqvl\displaylimits_{r \to \infty} \frac{4}{3}\pi r^3,
        \end{align*}
        with the lattice size satisfying $L \ge r$.
        \item The lattice $V_d$ with neighbours of order $n \in \mbN$ connected, so that
        \begin{align*}
            z \eqvl\displaylimits_{d \to \infty} \frac{(2d)^n}{n!}.
        \end{align*}
    \end{itemize}
\end{remark}

The one-lattice-site Hilbert space is $\ell^2(\mbC)$, and we denote its canonical Hilbert basis by $(\ket{n})_{n \in \mbN}$. We define the standard annihilation and creation operators $a, a^*$ satisfying the CCR by
\begin{align*}
    &a \ket{0} = 0, \quad \forall n \in \mbN^*,\ a \ket{n} \coloneq \sqrt{n}\,\ket{n-1}, \\
    &\forall n \in \mbN,\ a^* \ket{n} \coloneq \sqrt{n+1}\,\ket{n+1}.
\end{align*}
The particle number operator is
\begin{align*}
    \mathcal{N} \coloneq a^* a,
\end{align*}
and the Fock space is
\begin{equation}
    \mcF_{V_z} \coloneq \ell^2(\mbC)^{\otimes \abs{V_z}} \cong \mcF_+\prth{L^2(V_z, \mbC)} \coloneq \bigoplus_{n \in \mbN} L^2(V_z, \mbC)^{\otimes_+ n}. \label{eq:eqv fock spaces}
\end{equation}
In these notations, $\otimes_+$ denotes the symmetric tensor product and $\mcF_+\prth{L^2(V_z, \mbC)}$ the bosonic Fock space constructed over the one-particle Hilbert space $L^2(V_z, \mbC)$. Equation~\eqref{eq:eqv fock spaces} provides two equivalent representations of the same Fock space. The right-hand side is the standard ``particle representation'', while we call the left-hand side the ``lattice-site representation''. The latter will be the convenient setting for this paper, as we aim to control correlations between lattice sites rather than between particles.

We denote by $\mcL$ the set of linear operators, by $\mcL_+$ the set of positive linear operators, by $\mcL^p$ the $p$-Schatten class for $p \in \sbra{1, \infty}$, and by $\mathcal{K}$ the set of compact operators. Given an order $\le$ on $V_z$, we denote
\begin{align}
    \forall A \in \mcL\prth{\ell^2(\mbC)},\ x \in V_z,\ A_x \coloneq \mathds{1}_{\ell^2}^{\otimes \abs{\sett{y \in V \mid y < x}}} \otimes A \otimes \mathds{1}_{\ell^2}^{\otimes \abs{\sett{y \in V \mid x < y}}} \in \mcL \prth{\mcF_z}. \label{eq:A_x}
\end{align}
For a hopping amplitude $J \in \mbR$, a chemical potential $\mu \in \mbR$, and a coupling constant $U \in \mbR$, we define the Bose--Hubbard Hamiltonian
\begin{align}
    H_{V_z} \coloneq - \frac{J}{z} \sum_{\sett{x, y} \in E_z} \prth{a_x^* a_y + a_y^* a_x}
    + (J - \mu) \sum_{x \in V_z} \mathcal{N}_x
    + \frac{U}{2} \sum_{x \in V_z} \mathcal{N}_x (\mathcal{N}_x - 1). \label{eq:HBH}
\end{align}
In view of \eqref{eq:z/2}, we note that the kinetic energy is given by
\begin{align*}
    d\Gamma(-\Delta)
        \coloneq&\ \sum_{\sett{x, y} \in E_z} (a_x^* - a_y^*)(a_x - a_y)
        = \sum_{\sett{x, y} \in E_z} \prth{\mcN_x + \mcN_y}
        - \sum_{\sett{x, y} \in E_z} \prth{a_x^* a_y + a_y^* a_x} \\
        =&\ z \sum_{x \in V_z} \mcN_x
        - \sum_{\sett{x, y} \in E_z} \prth{a_x^* a_y + a_y^* a_x},
\end{align*}
with scaled amplitude $J/z$, so that the energy contribution is of the same order as that of the one-lattice-site terms when $z\to\infty$. We emphasize that the mean-field limit considered in this work amounts to averaging over the $y$ variable in \eqref{eq:HBH}, i.e. over the neighbouring shell $V_z^x$ around $x \in V_z$. This is the main difference from known results on mean-field limits, which average over particle interactions \cite{benedikter_lec,Golse_2016,MFBH}. This is also the reason why the mean-field scaling $1/z$ appears in front of the kinetic energy in \eqref{eq:HBH}, as opposed to the usual $1/N$ rescaling of the interaction energy.

Note that, using \eqref{eq:z/2}, the Bose--Hubbard Hamiltonian \eqref{eq:HBH} can be rewritten for any $\alpha \in \C$ as
\begin{align*}
H_{V_z} &=  \sum_{x \in V_z} \prth{-J \prth{\alpha a_x^*  + \overline{\alpha} a_x - \abs{\alpha}^2} + (J - \mu) \mathcal{N}_x + \frac{U}{2}\mathcal{N}_x (\mathcal{N}_x - 1)} \\
&\quad - \frac{J}{z} \sum_{ \sett{x, y} \in E_z} \prth{(a_x^*-\overline{\alpha}) (a_y-\alpha) + (a^*_y-\overline{\alpha})(a_x-\alpha)}.
\end{align*}
Then the mean-field theory is obtained by neglecting all terms quadratic in $a-\alpha$, and choosing $\alpha$ as the averaged annihilation operator. Hence, we introduce the nonlinear mean-field operator
\begin{equation}
h_\varphi \coloneq - J \big( \alpha_\varphi \ad + \overline{\alpha_\varphi} a - \abs{\alpha_\varphi}^2 \big) + (J - \mu) \mathcal{N} + \frac{U}{2} \mathcal{N}(\mathcal{N}-1), \label{eq:h mf}
\end{equation}
with the order parameter
\begin{align}
    \alpha_\varphi \coloneqq \scp{\varphi}{a \varphi} \label{eq:alpha}
\end{align}
where $\varphi \in \ell^2(\mbC)$ is a one-lattice-site state. The corresponding energy functional is
\begin{equation}\label{mf_energy_func}
\bk{\varphi}{h_\varphi \varphi} = - J |\alpha_\varphi|^2 + (J - \mu) \bk{\varphi}{\mathcal{N}\varphi} + \frac{U}{2} \bk{\varphi}{\mathcal{N}(\mathcal{N}-1)\varphi}.
\end{equation}

\subsection{Main results}

Our main theorem is the convergence of the ground state energy of the Bose--Hubbard model \eqref{eq:HBH} to the ground state energy of the corresponding mean-field energy functional \eqref{mf_energy_func}.

\begin{theorem}\label{th:energy cvg}
    If $U > 0$ and $J \ge 0$, then
    \begin{align*}
        \inf_{\substack{\psi \in \mcF_{V_z} \\ \norm{\psi}_{\mcF_{V_z}}=1}} \frac{\bk{\psi}{H_{V_z} \psi}}{\abs{V}} 
            \cvgto_{z\to\infty} \inf_{\substack{\varphi \in \ell^2(\mbC) \\ \norm{\varphi}_{\ell^2} = 1}} \bk{\varphi}{h_\varphi \varphi}.
    \end{align*}
\end{theorem}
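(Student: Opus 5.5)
The plan is to prove matching upper and lower bounds.

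\emph{Upper bound.} Use a Gutzwiller product trial state $\psi = \varphi^{\otimes \abs{V_z}}$, where $\varphi$ is normalized and $\bk{\varphi}{\mcN^2\varphi}<\infty$; such $\varphi$ are dense in form domain, and the infimum on the right is unchanged if we restrict to them. For this state, $\bk{\psi}{a_x^* a_y \psi} = \abs{\alpha_\varphi}^2$ whenever $x\neq y$. By \eqref{eq:z/2}, the hopping term therefore gives $-\frac{J}{z}\cdot 2\cdot\frac{z}{2}\abs{V_z}\abs{\alpha_\varphi}^2 = -J\abs{V_z}\abs{\alpha_\varphi}^2$. Dividing by $\abs{V_z}$ yields exactly $\bk{\varphi}{h_\varphi\varphi}$ for every $z$. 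Taking the infimum over $\varphi$ gives $\limsup \le$ the right-hand side.

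\emph{Lower bound.} Using \eqref{eq:z/2}, write $H_{V_z} = \sum_{x\in V_z} K_x$ with the local Hamiltonian
\begin{align*}
K_x \coloneq -\frac{J}{z}\sum_{y\in V_z^x}\prth{a_x^* a_y + a_y^* a_x} + (J-\mu)\mcN_x + \frac{U}{2}\mcN_x(\mcN_x-1).
\end{align*}
The double counting of edges is compensated because each edge $\{x,y\}$ appears once in $K_x$ and once in $K_y$, while the prefactor is $J/z$ rather than $J/(2z)$. Hence
\begin{align*}
\frac{\bk{\psi}{H_{V_z}\psi}}{\abs{V_z}} \ge \min_x \mathrm{Tr}\prth{K_x \gamma_x},
\end{align*}
where $\gamma_x$ is the reduced density matrix of $\psi$ on the sites $\{x\}\cup V_z^x$. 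Moreover, $K_x$ is symmetric under permutations of the $z$ neighbours. So we may replace $\gamma_x$ by its symmetrization over the neighbour sites, an operator on $\ell^2\otimes(\ell^2)^{\otimes_+ z}$ in the site sense, i.e. a state symmetric in the last $z$ tensor factors. This is exactly the polaron-type setting. We then rewrite the energy as
\begin{align*}
\mathrm{Tr}(K_x\gamma_x) = \mathrm{Tr}\prth{\prth{-J(a^*\otimes a + a\otimes a^*) + h_0\otimes \mathds 1}\gamma^{(1,1)}},
\end{align*}
where $\gamma^{(1,1)}$ is the reduced state on the core site and one neighbour, and $h_0$ is the one-site part.

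Given a minimizing sequence $\psi_z$, we first obtain a priori bounds. Since $U>0$, the interaction dominates: $\mcN^2 \lesssim h_0 + C$, and $\abs{a^*\otimes a}\le \frac12(\mcN\otimes 1 + 1\otimes\mcN)$. Together these give uniform bounds on $\mathrm{Tr}(\mcN^2\gamma)$ on the core site, after using translation averaging so that neighbour-site marginals are controlled by core-site marginals (the average over $x$ of neighbour moments equals the average of core moments). Along a subsequence we extract weak-$*$ limits $\gamma^{(1,k)}$ of the core+$k$-neighbour marginals, which are consistent in $k$ and symmetric. The polaron-type quantum de Finetti theorem (Theorem~\ref{th:QDF}) then gives a probability measure $\mu$ on pairs such that $\gamma^{(1,k)} = \int \Gamma\otimes\ket{\varphi}\bra{\varphi}^{\otimes k}\,d\mu(\Gamma,\varphi)$, with $\Gamma$ a core-site state and $\varphi$ a neighbour state. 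Actually one needs the joint structure $\int \Gamma_\varphi \otimes \ket{\varphi}\bra{\varphi}^{\otimes k}$ with $\Gamma_\varphi\ge0$.

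\emph{Main obstacle.} The main obstacle is the loss of mass in the weak limit in infinite dimensions. The uniform $\mcN^2$ bound gives compactness of the one-site marginals, hence tightness, so there is no mass loss for the core. The neighbour marginals are controlled similarly, which allows upgrading to trace-class convergence. By lower semicontinuity of the positive part and continuity of the hopping term under the $\mcN$ moment bounds, the energy satisfies
\begin{align*}
\liminf \ge \int \prth{-J\,2\,\mathrm{Re}\,\overline{\alpha_\varphi}\,\mathrm{Tr}(a\Gamma_\varphi) + \mathrm{Tr}(h_0\Gamma_\varphi)}\,d\mu.
\end{align*}
Next we must close the loop: the neighbour state $\varphi$ and the core state $\Gamma$ should coincide in law by translation invariance of the averaged problem. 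I would handle this by averaging over $x$, which makes the core and neighbour one-site marginals equal. Then, for each $\varphi$, minimizing over $\Gamma$ gives the linear operator $h$ with parameter $\alpha_\varphi$, and we use the inequality $-2J\,\mathrm{Re}(\bar\alpha\beta)\ge -J\abs{\alpha}^2-J\abs{\beta}^2$ with $J\ge0$. This splits the energy into $\tfrac12$ core $+$ $\tfrac12$ neighbour contributions, each of the form $-J\abs{\beta}^2+\mathrm{Tr}(h_0\Gamma)$. The functional $\Gamma\mapsto -J\abs{\mathrm{Tr}(a\Gamma)}^2+\mathrm{Tr}(h_0\Gamma)$ is concave in mixtures, so its minimum over mixed states is attained at pure states, giving at least $\inf_\varphi\bk{\varphi}{h_\varphi\varphi}$. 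Matching the core and neighbour parts via the equality of marginals is the delicate step, and I expect it to require care with the symmetrization.
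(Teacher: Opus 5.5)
Your architecture matches the paper's: translation averaging to a core-plus-shell problem, $\mcN^2$ bounds, weak-$*$ limits, Theorem~\ref{th:QDF}, then $2\,\mathrm{Re}(\overline{\alpha}\beta)\le\abs{\alpha}^2+\abs{\beta}^2$. But your local Hamiltonian $K_x$ carries only the \emph{core} on-site energy, and this leaves a real gap.

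For $J>0$, $K_x$ is unbounded below on states: a coherent shell of large amplitude aligned with the core sends it to $-\infty$. So you cannot relax to an unconstrained minimization, and you must keep the lattice-derived state itself. This has two consequences.

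First, the bound by $\min_x \mathrm{Tr}(K_x\gamma_x)$ is useless. Nothing controls the shell of the minimizing $x$. Moreover, the equality of core and shell one-site marginals that you later invoke holds only for the average $\frac{1}{\abs{V_z}}\sum_x\gamma_x$ (the paper's \eqref{eq:gamma1z}). You need that exact identity, not a minimum.

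Second, and more seriously, the averaged state is only \emph{permutation invariant} in the last $z$ factors, even after symmetrizing over the shell. It is not an operator on $\ell^2(\mbC)\otimes\ell^2(\mbC)^{\otimes_+ z}$, because lattice sites are distinguishable and $\psi\in\mcF_{V_z}$ has no exchange symmetry between sites. Theorem~\ref{th:QDF} needs a $(1,\infty)$-bosonic state, and its conclusion fails without this. For example, $(\frac12\mathds{1}_{\mbC^2})^{\otimes k}$ is permutation invariant but is not a mixture of the $p_u^{\otimes k}$, which live on the symmetric subspace.

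The paper avoids both issues by using the marginal identity at finite $z$ to put the shell on-site energy back. It works with $H_{1,z}$ and energy per site $\mathrm{Tr}(\gamma^{(1,z)}H_{1,z})/(2z)$. Since $H_{1,z}$ is bounded below (Proposition~\ref{prop:M2 esti}), one may relax to its infimum over all of $\ell^2(\mbC)^{\otimes(z+1)}$. That infimum is identified with the infimum over the bosonic sector $\mcF_{1,z}$ (Proposition~\ref{prop:TIR}), and the minimizing sequence is taken there, where Theorem~\ref{th:QDF} applies. Your ``delicate matching step'' then disappears: the split directly produces a core term in $\zeta(u)$ and a shell term in $p_u$, each of mean-field form. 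To keep your route instead, you would need a permutation-invariant polaron de Finetti theorem with mixed shell states $\sigma^{\otimes k}$; your concavity argument would still close the proof. But that is not the theorem you cite.

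There is also a factor-$2$ error. Each edge $\{x,y\}$ occurs in both $K_x$ and $K_y$, so $H_{V_z}=\sum_x K_x$ requires the hopping prefactor $J/(2z)$, not $J/z$. Compare with your upper-bound computation: on product states, $\mathrm{Tr}(K_x\gamma_x)$ has hopping part $-2J\abs{\alpha_\varphi}^2$, whereas $H_{V_z}$ gives $-J\abs{\alpha_\varphi}^2$ per site. Consequently, your limiting integrand $-2J\,\mathrm{Re}(\overline{\alpha_\varphi}\beta)+\mathrm{Tr}(h_0\Gamma_\varphi)$, after splitting and matching, only yields the mean-field functional with $-2J\abs{\alpha_\varphi}^2$ in place of $-J\abs{\alpha_\varphi}^2$. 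That bound is too weak to match the upper bound. Your claimed split into halves ``each of the form $-J\abs{\beta}^2+\mathrm{Tr}(h_0\Gamma)$'' holds only with the corrected prefactor and the shell on-site energy added, which is exactly $H_{1,z}/(2z)$.

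Your concavity remark for the mixed core state is correct. It supplies a step the paper uses tacitly in \eqref{eq:win}.
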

The proof is given in Section~\ref{ch_energy_bound}.

\begin{remark}\label{rmk:diagram discussion 1}
Let us gather a few simple facts about the ground state energy of the Bose--Hubbard model \eqref{eq:HBH}. Note that we do not fix the particle number (or density) as a constraint. Then the vacuum state is a possible trial state, hence the ground state energy is always non-positive. Furthermore, for example, for $U<0$, we can use a sequence of trial states with increasing density to see that in this case there is no ground state, i.e., the ground state energy is $-\infty$. Then there are only a few non-trivial regimes of parameters $J,\mu,U \in \R$, i.e., regimes where the ground state energy is neither $0$ nor $-\infty$. The most physically relevant regime corresponds to $J\geq 0$, $\mu >0$, and $U>0$. The physically less relevant regimes, with non-trivial ground state, occur for negative hopping amplitude, i.e., when $J<0$, together with $U>0$ and $\mu > J$.
\end{remark}

\begin{remark}\label{rmk:diagram discussion 2}
    In the regime where $J\geq 0$, $\mu >0$, and $U>0$, one expects a superfluid phase for $\frac{J}{U} \gg 1$. For small values of $\frac{J}{U}$, a Mott insulating phase can occur. Mean-field theory predicts a phase diagram as in Figure~\ref{fig:phase diagram}, and this prediction is believed to be qualitatively correct already for relatively small coordination numbers, e.g., for a simple cubic lattice (three dimensions) with nearest neighbour interaction.
\end{remark}

\begin{figure}[H]
  \begin{minipage}[c]{0.55\textwidth}
    \includegraphics[width=0.99\linewidth]{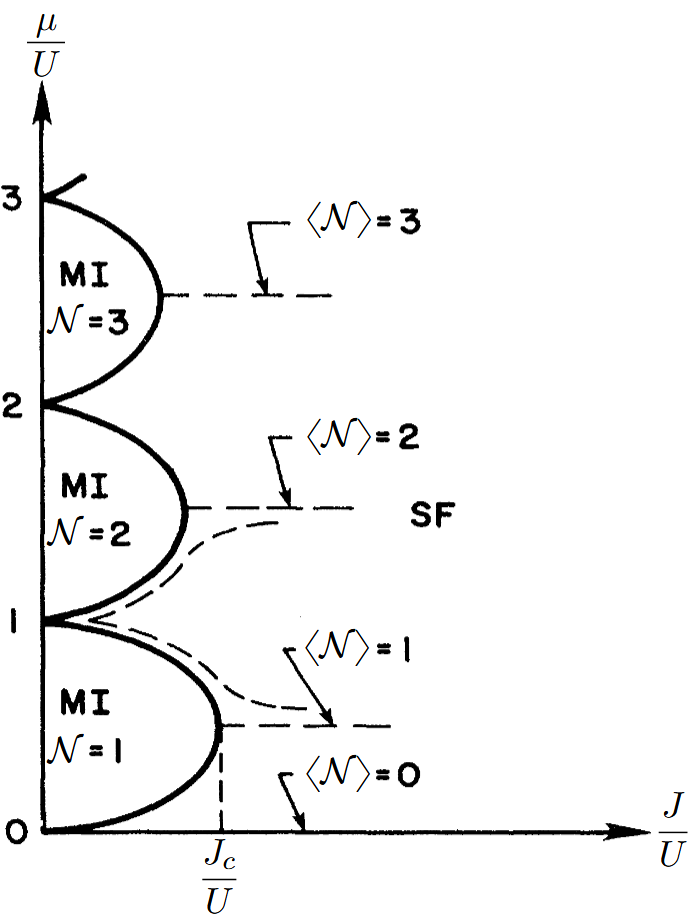}
  \end{minipage}\hfill
  \begin{minipage}[c]{0.44\textwidth}

    \caption{Mott insulator (MI) / Superfluid (SF) phase diagram \cite{PhysRevB.40.546}. The diagram is obtained by minimizing the mean-field energy functional \eqref{mf_energy_func}. Provided a ground state $\varphi(J, \mu, U)$, the Mott insulator phase is defined as region where $\alpha_\varphi(J, \mu, U) = 0$.}
    \label{fig:phase diagram}
  \end{minipage}
\end{figure}
The upper bound in Theorem~\ref{th:energy cvg} follows directly from a simple trial state argument using a lattice-site product state, see Proposition~\ref{prop:UB}. The lower bound is non-trivial since the Bose--Hubbard Hamiltonian \eqref{eq:HBH} is in general not symmetric under the exchange of lattice sites. However, in Section~\ref{ssec:Hz} we will show that in the proof of a lower bound it can be reduced to the local Hamiltonian
\begin{align}\label{eq_H_loc_z}
    H_{1, z} \coloneq \sum_{i=1}^z \prth{- J \prth{ a_0^* a_i + a_i^* a_0} + (J-\mu) \prth{\mcN_0 + \mcN_i} + \frac{U}{2} \prth{\mcN_0(\mcN_0-1) + \mcN_i(\mcN_i-1)}},
\end{align}
which is symmetric with respect to exchanging the variables $i\in \intint{1,z}$. Our strategy is to use a quantum de Finetti theorem suitable for such Hamiltonians in the proof of the lower bound. 

The system described by the Hamiltonian \eqref{eq_H_loc_z} behaves like a system consisting of one particle of a first type (corresponding to the index $i=0$) and $z$ bosonic particles of a second type (indices $i\in \intint{1,z})$. Hence, we introduce the following formalism for bosonic systems with multiple species.

\begin{definition}[Multiple-species bosonic states and reduced density matrices]\label{def:mult bosons}
    Let $\mcH_1, \mcH_2$ be two separable Hilbert spaces. Given $N_1, N_2 \in \mbN$, an operator
    \begin{align*}
        \gamma_{N_1, N_2} \in \mcL^1_+\prth{\mcH_1^{\otimes_+ N_1} \otimes \mcH_2^{\otimes_+ N_2}} \text{ satisfying } \Tr{\gamma_{N_1, N_2}} = 1
    \end{align*}
    is called a $(N_1, N_2)$-bosonic state. Let $k_1 \in \intint{0, N_1}$ and $k_2\in \intint{0, N_2}$. We define
    \begin{align*}
        \gamma_{N_1, N_2}^{(k_1, k_2)} \coloneq \PTr{\mcH_1^{\otimes (N_1-k_1)} \otimes \mcH_2^{\otimes (N_2-k_2)}}{\gamma_{N_1, N_2}}
    \end{align*}
    as the $(k_1, k_2)$-reduced density matrix, where for $i\in \sett{1, 2}$ we trace out $N_i - k_i$ variables from the symmetric variables in $\mcH_i$.
\end{definition}

Note that this definition can be generalized to more than two species of particles. Next, we generalize to Fock spaces.

\begin{definition}[Multiple-species infinite bosonic states]\label{def:inf bosons states}
    We extend Definition~\ref{def:mult bosons} to the case where $N_2 = \infty$. An $(N_1, \infty)$-bosonic state is a sequence
    \begin{align*}
        \gamma_{N_1, \infty} \coloneq \prth{\gamma_{N_1, N_2}}_{N_2 \in\mbN} \in \mcL\prth{\mcH_1^{\otimes_+ N_1} \otimes \mcF_+\prth{\mcH_2}}
    \end{align*}
    where $\gamma_{N_1, N_2} \in \mcL\prth{\mcH_1^{\otimes_+ N_1} \otimes \mcH_2^{\otimes_+ N_2}}$ are $(N_1, N_2)$-bosonic states satisfying the consistency condition
    \begin{align*}
        \forall N_2 \in \mbN, \ \gamma_{N_1, N_2+1}^{(N_1, N_2)} = \gamma_{N_1, N_2}.
    \end{align*}
    Then we can define
    \begin{align*}
        \forall k_1 \in \intint{0:N_1},\ k_2\in \mbN,\ \gamma_{N_1, \infty}^{(k_1, k_2)} \coloneq \gamma_{N_1, N_2}^{(k_1, k_2)} \text{ for any } N_2 \ge k_2.
    \end{align*}
\end{definition}

This definition can be extended to the case $N_1 = \infty$ and to more than two species of bosons. 

For such states, we have the following quantum de Finetti theorem. Let us denote by $S_{\mcH_s}$ the sphere in $\mcH_s$, and by $\mcM$ the set of Radon measures.

\begin{theorem}[Polaron quantum De Finetti in the limit $N\to\infty$] \label{th:QDF}
    Let $\gamma \in \mcL\prth{\mcH_0 \otimes  \mcF_+\prth{\mcH_s}}$ be a $(1,\infty)$-bosonic state. We assume that there exists a sequence $(P_m)_{m\in\mbN}$ of projectors on $\mcH_s$, respectively of rank $m+1$, such that
    \begin{align*}
        \Tr{\gamma^{(0, 1)}P_m} \cvgto_{m\to\infty} 1.
    \end{align*}
    Then there exist
    \begin{itemize}
        \item a probability measure $\mbP \in \mcM\prth{S_{\mcH_s},\mbR_+}$,
        \item a function $\zeta \in L^1\prth{S_{\mcH_s}, \mcL^1_+(\mcH_0)}$ satisfying, $\mbP$-a.e., $\Tr{\zeta} = 1$,
    \end{itemize}
    such that
    \begin{align}
        \forall k\in \mbN^*,\ \gamma^{(1, k)} = \intr_{S_{\mcH_s}} \zeta(u) \otimes p_u^{\otimes k}  \ d\mbP(u).
    \end{align}
\end{theorem}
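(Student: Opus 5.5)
We prove Theorem~\ref{th:QDF} by reducing it to the classical (Størmer/Hudson--Moody) quantum de Finetti theorem for the bath variables, after conditioning on the core variable $\mcH_0$. The basic device is a family of scalar-valued states obtained by testing the core variable against operators. For every $A \in \mcL_+(\mcH_0)$ with $\Tr{\gamma^{(1,0)}A}>0$ we define the normalized $(0,\infty)$-bosonic state
\begin{align*}
    \gamma_A^{(0,k)} \coloneq \frac{\PTr{\mcH_0}{(A\otimes \mathds{1})\gamma^{(1,k)}}}{\Tr{\gamma^{(1,0)}A}}.
\end{align*}
Consistency in $k$ is inherited from Definition~\ref{def:inf bosons states}, and $\gamma_A^{(0,1)} \le \norm{A}\,\gamma^{(0,1)}/\Tr{\gamma^{(1,0)}A}$. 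Hence the tightness assumption $\Tr{\gamma^{(0,1)}P_m}\to 1$ transfers to every $\gamma_A$, so no mass escapes in the bath variables.

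The standard infinite-dimensional de Finetti theorem (valid under exactly this tightness, in the strong version with no loss of mass) then gives a probability measure $\mu_A$ on $S_{\mcH_s}$ with $\gamma_A^{(0,k)} = \int p_u^{\otimes k}\,d\mu_A(u)$ for all $k$. Multiplying back, we define
\begin{align*}
    \nu_A \coloneq \Tr{\gamma^{(1,0)}A}\,\mu_A,
\end{align*}
which is a finite positive measure on $S_{\mcH_s}$ modulo phases, i.e.\ on the projective space.

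Next we check that $A\mapsto\nu_A$ is additive and positively homogeneous. Uniqueness of the de Finetti measure gives this: the measure is determined by its moments $\int \langle u^{\otimes k},B\,u^{\otimes k}\rangle\, d\nu$, which are linear in $A$. We then extend $A\mapsto\nu_A$ by linearity to trace-class $A$ (or compact $A$, using $\gamma^{(1,0)}\in\mcL^1$). Every $\nu_A$ is absolutely continuous with respect to
\begin{align*}
    \mbP \coloneq \nu_{\mathds{1}},
\end{align*}
because $0\le A\le \norm{A}\mathds{1}$ implies $\nu_A\le\norm{A}\mbP$. The Radon--Nikodym derivative $f_A = d\nu_A/d\mbP$ is then a bounded, positive, linear functional in $A$ for $\mbP$-a.e.\ $u$.

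The main obstacle is choosing these derivatives jointly measurably so that, for a.e.\ $u$, the map $A\mapsto f_A(u)$ is a normal state on $\mcL(\mcH_0)$, hence of the form $\Tr{\zeta(u)A}$ with $\zeta(u)\in\mcL^1_+$ and $\Tr{\zeta(u)}=1$. We proceed in three steps.
\begin{itemize}
    \item Fix a countable $\mbQ+i\mbQ$-linear dense set of finite-rank operators, and choose versions $f_A$ that are a.e.\ additive and positive on this countable set.
    \item For a.e.\ $u$ this defines a positive functional on compact operators, i.e.\ a $\zeta(u)\in\mcL^1_+$ with $\Tr{\zeta(u)}\le 1$.
    \item Show there is no loss of mass: $\int \Tr{\zeta(u)}\,d\mbP = \Tr{\gamma^{(1,0)}} = 1$. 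To get this, take finite-rank projectors $Q_n\uparrow\mathds{1}$ on $\mcH_0$; monotone convergence gives $\int\Tr{\zeta Q_n}\,d\mbP=\Tr{\gamma^{(1,0)}Q_n}\to 1$. Combined with $\Tr{\zeta}\le 1$, this forces $\Tr{\zeta}=1$ a.e.
\end{itemize}
Finally, for $A$ in the countable dense set and any bounded $B$ on $\mcH_s^{\otimes k}$, the definition of $\nu_A$ gives
\begin{align*}
    \Tr{\gamma^{(1,k)}A\otimes B} = \int \Tr{\zeta(u)A}\,\langle u^{\otimes k},B\,u^{\otimes k}\rangle\, d\mbP(u).
\end{align*}
Density of $\mathrm{span}\{A\otimes B\}$ in the compact operators then yields the claimed identity $\gamma^{(1,k)}=\int\zeta(u)\otimes p_u^{\otimes k}\,d\mbP(u)$.

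Measurability of $\zeta$ into $\mcL^1$ follows from weak-$*$ measurability together with separability of $\mcH_0$.
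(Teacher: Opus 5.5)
Your proof is correct, and it takes a genuinely different route from the paper's. You apply the scalar strong de Finetti theorem (St{\o}rmer, Hudson--Moody) as a black box to the conditioned shell states $\gamma_A$, and the uniqueness part of that theorem is what you need. Uniqueness makes $A\mapsto\nu_A$ linear. You then recover $\zeta$ as the Radon--Nikodym density of this operator-valued measure with respect to $\mathbb{P}=\nu_{\mathds{1}}$, which is simply the de Finetti measure of the shell marginal $\gamma^{(0,\cdot)}$.

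The paper instead builds the decomposition from scratch, in three stages:
\begin{itemize}
\item an explicit finite-$N$, finite-dimensional construction via Schur's formula with the bound $4mk/(N+1)$, whose key point is that compressing the shell onto $p_u^{\otimes k}$ forces factorization with the core;
\item the limit $N\to\infty$, via weak-$*$ compactness of trace-class-valued measures and a Radon--Nikodym step;
\item Fock-space localization with an auxiliary variable $\lambda\in[0,1]$, a Kolmogorov-type extension for trace-class-valued measures as $m\to\infty$, and a tail estimate forcing concentration on $\lambda=1$.
\end{itemize}
Your argument is shorter and bypasses the localization and the vector-valued Kolmogorov extension entirely. It also makes the identification of $\mathbb{P}$, and the uniqueness of $(\zeta,\mathbb{P})$ up to phases, transparent. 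The paper's argument is self-contained and yields a constructive, quantitative finite-$N$ statement, which could matter if one wants rates in $z$.

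Two minor remarks.
\begin{itemize}
\item The tightness hypothesis is automatic. For any trace-class $\gamma^{(0,1)}$ on a separable $\mathcal{H}_s$, the projections onto the first $m+1$ vectors of an orthonormal basis satisfy it. The strong de Finetti theorem also needs no such assumption, so transferring it to $\gamma_A$ is harmless but unnecessary.
\item In the countable-set step, state explicitly the conditions you impose almost everywhere: $f_{B^*B}\ge 0$ for $B$ in the rational finite blocks $M_n(\mathbb{Q}+i\mathbb{Q})$, and $f_{P_n}\le 1$ for the projections $P_n$ onto $\mathrm{span}(e_1,\dots,e_n)$. The latter follows from $\nu_A\le\|A\|\,\mathbb{P}$. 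With these, each block gives a density matrix, the blocks are consistent in $n$, and their increasing limit is $\zeta(u)$ with $\mathrm{Tr}\,\zeta(u)\le 1$. That bound is exactly what your no-loss-of-mass step needs.
\end{itemize}
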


The theorem is proven in Section~\ref{sec_de_Finetti}. 

The remainder of the paper proceeds as follows. In Section~\ref{ch_energy_bound}, we provide a proof of Theorem~\ref{th:energy cvg}. First, in Section~\ref{ch_upper_bound}, we prove the upper bound. In Section~\ref{ssec:Hz}, we discuss the reduction of the Bose--Hubbard Hamiltonian \eqref{eq:HBH} to the reduced Hamiltonian \eqref{eq_H_loc_z}. Additionally, we prove bounds on certain moments of number operators that are used to verify the assumptions of Theorem~\ref{th:QDF}. With that, we conclude the proof of Theorem~\ref{th:energy cvg} in Section~\ref{ch_proof_of_GS_Convergence}. Section~\ref{sec_de_Finetti} is devoted to proving Theorem~\ref{th:QDF}. First, in Section~\ref{sec_Finetti_finite}, we deal with the finite dimensional case, and in Section~\ref{sec_Fock_space_loc}, we then use Fock space localization methods to prove the infinite dimensional case.

\section{Ground state energy convergence}\label{ch_energy_bound}

\subsection{Upper energy bound}\label{ch_upper_bound}

The upper bound is trivial in the sense that it is sufficient to take a factorized state as a trial state. Indeed, the Bose--Hubbard energy per lattice site for a lattice-site-factorized state is equal to the mean-field energy.

\begin{proposition}[Upper energy bound]\label{prop:UB}
    Let $\varphi \in \ell^2(\mbC)$ such that $\norm{\varphi}_{\ell^2} = 1$, then
    \begin{align}
        \frac{\bk{\varphi^{\otimes \abs{V_z}}}{H_{V_z} \varphi^{\otimes \abs{V_z}}}}{\abs{V_z}}
            = \bk{\varphi}{h_\varphi \varphi} \label{eq:hmf expression}
    \end{align}
    and hence
    \begin{align}
        \inf_{\substack{\psi \in \mcF_{V_z} \\ \norm{\psi}_{\mcF_{V_z}}=1}} \frac{\bk{\psi}{H_{V_z} \psi}}{\abs{V}}
            \le \inf_{\substack{\varphi \in \ell^2(\mbC) \\ \norm{\varphi}_{\ell^2} = 1}} \bk{\varphi}{h_\varphi \varphi}. \label{eq:UB}
    \end{align}
\end{proposition}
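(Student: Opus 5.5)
The plan is to compute the energy of the product state $\Psi \coloneq \varphi^{\otimes \abs{V_z}}$ term by term in \eqref{eq:HBH}, then use \eqref{eq:z/2} to count the hopping terms, and finally optimize over $\varphi$.

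First, for a one-site operator $A$ and $x \in V_z$, the definition \eqref{eq:A_x} gives $\bk{\Psi}{A_x \Psi} = \bk{\varphi}{A\varphi}$, since every other factor contributes $\norm{\varphi}^2 = 1$. For an edge $\sett{x,y}$ with $x \neq y$ we have $a_x^* a_y = (a^*)_x a_y$, a product of operators acting on different tensor factors. Hence
\begin{align*}
    \bk{\Psi}{a_x^* a_y \Psi} = \scp{\varphi}{a^*\varphi}\scp{\varphi}{a\varphi} = \overline{\alpha_\varphi}\,\alpha_\varphi = \abs{\alpha_\varphi}^2,
\end{align*}
and the same holds for $a_y^* a_x$. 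Summing, the hopping part equals $-\frac{J}{z}\cdot 2\abs{E_z}\abs{\alpha_\varphi}^2 = -J\abs{V_z}\abs{\alpha_\varphi}^2$ by \eqref{eq:z/2}. The on-site parts contribute $\abs{V_z}\big((J-\mu)\bk{\varphi}{\mcN\varphi} + \frac U2 \bk{\varphi}{\mcN(\mcN-1)\varphi}\big)$. Dividing by $\abs{V_z}$ and comparing with \eqref{mf_energy_func} gives \eqref{eq:hmf expression}.

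The only point needing care is domain issues. Expectations of $a$, $\mcN$ and $\mcN^2$ may be infinite when $\varphi \notin D(\mcN)$. I would handle this as follows:
\begin{itemize}
    \item If $\bk{\varphi}{\mcN^2\varphi} < \infty$, all terms above are finite and the computation is rigorous, since $\abs{\alpha_\varphi}^2 \le \bk{\varphi}{\mcN\varphi}$ by Cauchy--Schwarz.
    \item If $\bk{\varphi}{\mcN(\mcN-1)\varphi} = \infty$, then with $U>0$ the interaction term dominates (the hopping term is controlled by $\mcN$). Both sides of \eqref{eq:hmf expression} then equal $+\infty$ in the quadratic-form sense. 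Alternatively, one restricts the infimum on the right to finite-energy $\varphi$, which does not change its value.
\end{itemize}

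For \eqref{eq:UB}, note that $\Psi$ is normalized and lies in $\mcF_{V_z}$. It is therefore an admissible trial state, so the left-hand infimum is at most $\bk{\varphi}{h_\varphi\varphi}$ for every normalized $\varphi$. Taking the infimum over $\varphi$ concludes.

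There is no real obstacle here. The main thing to get right is the edge counting $\sum_{\sett{x,y}\in E_z} 2 = z\abs{V_z}$, which is exactly what the $1/z$ scaling compensates.
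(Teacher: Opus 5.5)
Your proposal is correct and follows the paper's proof: you evaluate each term of $H_{V_z}$ on the product state $\varphi^{\otimes\abs{V_z}}$, use $\abs{E_z} = \frac{z}{2}\abs{V_z}$ to turn the hopping sum into $-J\abs{\alpha_\varphi}^2$ per site, and then use this normalized product state as a trial state and optimize over $\varphi$. Your discussion of domains (finiteness of $\bk{\varphi}{\mcN^2\varphi}$, with energy $+\infty$ otherwise when $U>0$) is extra care that the paper leaves implicit, and your two cases are exhaustive because $\mcN^2 \le 2\mcN(\mcN-1) + 1$.
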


\begin{proof}
    Using \eqref{eq:HBH} and then \eqref{eq:z/2}, \eqref{eq:alpha}, and \eqref{mf_energy_func}, we obtain
    \begin{align*}
        \frac{\bk{\varphi^{\otimes \abs{V_z}}}{H_{V_z} \varphi^{\otimes \abs{V_z}}}}{\abs{V_z}}
            = -\frac{2J \abs{E}}{z \abs{V_z}} \bk{\varphi}{a^* \varphi}\bk{\varphi}{a \varphi}
                + (J-\mu) \bk{\varphi}{\mathcal{N}\varphi} + \frac{U}{2}\bk{\varphi}{\mcN(\mcN - 1) \varphi}
            = \bk{\varphi}{h_\varphi \varphi}.
    \end{align*}
    Equation~\eqref{eq:UB} follows by minimizing over $\varphi$ and noticing that $\varphi^{\otimes \abs{V_z}} \in \mcF_{V_z}$ with $\norm{\varphi^{\otimes \abs{V_z}}}_{\mcF_{V_z}}=1$.
\end{proof}

\subsection{Translation invariance}\label{ssec:Hz}

Noting that \eqref{eq:HBH} is ``translation invariant'', we reduce the model to the local Hamiltonian
\begin{align}
H_{1, z} \coloneq \sum_{i=1}^z \prth{- J \prth{ a_0^* a_i + a_i^* a_0} + (J-\mu) \prth{\mcN_0 + \mcN_i} + \frac{U}{2} \prth{\mcN_0(\mcN_0-1) + \mcN_i(\mcN_i-1)}} \label{eq:Hz}
\end{align}
acting on $\ell^2(\mbC)^{\otimes (z+1)}$. As the graph is not embedded in a vector space, the translations we refer to are formally maps of the form $T: V_z \to V_z$ satisfying
\begin{align*}
\forall x \in V_z,\ T(V_z^x) = V_z^{T(x)}.
\end{align*}
For instance, in the examples we provide in Remark~\ref{rmk:ex graphs}, physical translations by a lattice vector are translations in this sense.

The Hamiltonian \eqref{eq:Hz} is symmetric with respect to the index $i$, i.e., with respect to its last $z$ variables. Hence, we consider the Hilbert space
\begin{align*}
\mcF_{1, z} \coloneq \ell^2(\mbC)\otimes \ell^2(\mbC)^{\otimes_+ z}.
\end{align*}
This is a Fock space sector with one distinguished lattice site, which we call the core, and $z$ indistinguishable lattice sites, called the (neighbouring) shell.

Let $\gamma \in \mcL^1\prth{\mcF_{V_z}}$ and $W \subseteq V_z$. We define the partial trace of $\gamma$ over $V_z \backslash W$ as the operator $\PTr{V_z \backslash W}{\gamma}$ on $\mcL^1\prth{\ell^2(\mbC)^{\otimes \abs{W}}}$ such that
\begin{align}
    \forall K \in \mathcal{K}\prth{\ell^2(\mbC)^{\otimes \abs{W}}},\ \Tr{\PTr{V_z \backslash W}{\gamma} K} \coloneq \Tr{\gamma K_W}, \label{eq:ptr}
\end{align}
with $K_W$ being the generalization of \eqref{eq:A_x} acting on the coordinates in $W$ in the order provided by the ordered set $(V_z, \le)$.

In the following, we would like to specify the order of action on the coordinates of $V_z$ such that the first coordinate corresponds to core variables. We proceed by considering a list $w \coloneq \prth{w_{1:\abs{W}}}$ such that $W = \sett{w_i,\ i \in \intint{1:\abs{W}}}$ and generalize \eqref{eq:ptr} through
\begin{align}
    \forall K \in \mathcal{K}\prth{\ell^2(\mbC)^{\otimes \abs{W}}},\ \Tr{\PTr{V_z \backslash w}{\gamma} K} \coloneq \Tr{\gamma K_w}, \label{eq:ptr-w}
\end{align}
where $K_w$ acts on the coordinates in $W$ in the order provided by the list $w$. When $w_1 \le w_2 \le \dots \le w_{\abs{W}}$, we recover
\begin{align*}
    \PTr{V_z \backslash w}{\gamma} = \PTr{V_z \backslash W}{\gamma}.
\end{align*}
We then introduce the $(1, z)$-reduced density matrix
\begin{align}
    \gamma^{(1, z)} \coloneq \frac{1}{\abs{V_z}} \sum_{x \in V_z} \PTr{V_z \backslash (x, V_z^x)}{\gamma}. \label{eq:gamma1z}
\end{align}
Note that specifying an order on the shell $V_z^x$ is irrelevant due to the symmetry of $H_{1, z}$. This is why, in \eqref{eq:gamma1z}, we only specify that the core variable $x$ corresponds to the first coordinate in $\gamma^{(1, z)}$. We then have the following result.

\begin{proposition}[Energy reduction via translation invariance]\label{prop:TIR}
For $H_{V_z}$ from \eqref{eq:HBH} and $H_{1,z}$ from \eqref{eq:Hz} we have
    \begin{align*}
        \inf_{\substack{\psi_{1, z} \in \mcF_{1, z} \\ \norm{\psi_{1, z}}_{\mcF_{1, z}}=1}} \frac{\bk{\psi_{1, z}}{H_{1, z} \psi_{1, z}}}{2z}
            = \inf_{\substack{\psi_{1, z} \in \ell^2(\mbC)^{\otimes(z+1)} \\ \norm{\psi_{1, z}}_{\ell^2}=1}} \frac{\bk{\psi_{1, z}}{H_{1, z} \psi_{1, z}}}{2z}
            \le \inf_{\substack{\psi \in \mcF_{V_z} \\ \norm{\psi}_{\mcF_{V_z}}=1}} \frac{\bk{\psi}{H_{V_z} \psi}}{\abs{V_z}}.
    \end{align*}
\end{proposition}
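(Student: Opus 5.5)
Two things are claimed: an equality of the two infima over $\mcF_{1,z}$ and over the full tensor product $\ell^2(\mbC)^{\otimes(z+1)}$, and an inequality against the Bose--Hubbard energy per site. I will treat them separately.

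\emph{The equality.} Since $\mcF_{1,z}\subset \ell^2(\mbC)^{\otimes(z+1)}$, the infimum over $\mcF_{1,z}$ is at least the unrestricted one. For the converse, I use that $H_{1,z}$ commutes with the unitary representation of the permutation group $\mathfrak S_z$ acting on the last $z$ factors. Take $\psi\in\ell^2(\mbC)^{\otimes(z+1)}$ with $\bk{\psi}{H_{1,z}\psi}$ finite. Its symmetrized mixed state $\frac1{z!}\sum_\sigma U_\sigma\ket{\psi}\bra{\psi}U_\sigma^*$ has the same energy.

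This mixed state is not supported on the symmetric subspace, so averaging alone does not suffice. A cleaner route is to observe that $H_{1,z}$ is (up to domain issues) the second quantization of a core particle coupled to a shell. In the particle representation, $\ell^2(\mbC)^{\otimes z}$ with on-site occupation numbers is the Fock space over $\mbC^z$. The subspace $\mcF_{1,z}$ is the bosonic symmetric part in the site index $i$. I would argue the infimum is attained, or approximated, on sectors that are invariant under $\mathfrak S_z$. Each isotypic component is invariant under $H_{1,z}$, so the infimum is the minimum over components, and I then need the trivial representation to give the lowest energy.

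This is a Perron--Frobenius type statement. In the occupation-number basis, the off-diagonal part $-J(a_0^*a_i+a_i^*a_0)$ has nonpositive matrix elements when $J\ge 0$. Hence for any $\psi$, replacing its coefficients by their absolute values does not raise the energy. Averaging $|\psi|$ over permutations keeps it nonnegative and, by convexity of the quadratic form combined with permutation invariance, also does not raise the energy; the resulting vector is symmetric. Normalization is preserved up to a factor that I handle by working with the Rayleigh quotient. If the paper allows $J<0$, I instead use the gauge transformation $a_0\mapsto -a_0$, which flips the sign of $J$. I expect this step to be the main obstacle, in particular the convexity argument for the averaged vector: $\norm{\sum_\sigma U_\sigma|\psi|}$ could shrink. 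The fix is that positive vectors overlap positively, so $\norm{\bar\psi}^2\ge \frac1{z!}\norm{\psi}^2$ is not enough. I would instead use the min--max characterization: the ground state of the (truncated, to ensure compactness via the $U\mcN^2$ term) Hamiltonian can be chosen with positive coefficients, hence is not orthogonal to the symmetric subspace, and by uniqueness it is symmetric.

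\emph{The inequality.} For normalized $\psi\in\mcF_{V_z}$, I sum $H_{1,z}$ translated to each core $x$ with shell $V_z^x$. Each edge $\{x,y\}$ appears twice, once with core $x$ and once with core $y$. Each on-site term at $x$ appears $z$ times as core and $z$ times as a shell site, i.e.\ $2z$ times. This gives
\begin{align*}
\sum_{x\in V_z}\bigl(H_{1,z}\bigr)_{(x,V_z^x)}=2z\,H_{V_z}.
\end{align*}
Here the hopping coefficient is $-2J$ against $-2z\cdot J/z$, which matches, and the term $(J-\mu)$ matches as well. Taking expectations and using \eqref{eq:gamma1z}, I get $\bk{\psi}{H_{V_z}\psi}/\abs{V_z}=\Tr{H_{1,z}\gamma^{(1,z)}}/(2z)$. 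The state $\gamma^{(1,z)}$ is a density matrix on $\ell^2(\mbC)^{\otimes(z+1)}$, so its energy is at least the unrestricted infimum. Taking the infimum over $\psi$ concludes.
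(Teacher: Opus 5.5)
Your proof of the inequality is the paper's: summing the copies of $H_{1,z}$ over all cores gives $2zH_{V_z}$, so the energy per site is $\mathrm{Tr}(\gamma^{(1,z)}H_{1,z})/(2z)$, where $\gamma^{(1,z)}$ is a state on the full tensor product.

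For the equality of the two infima, the paper offers at most a one-line appeal to the shell symmetry of $H_{1,z}$. You are right that permutation invariance alone cannot suffice, since a state commuting with all $U_\sigma$ need not live in $\mathcal{F}_{1,z}$. You are also right that the real input is the sign of the hopping in the occupation basis, with the gauge $a_0\mapsto -a_0$ for $J<0$.

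However, your concluding step ``by uniqueness it is symmetric'' fails. $H_{1,z}$ commutes with $\mathcal{N}_0+\sum_{i=1}^z\mathcal{N}_i$, so it is reducible on the whole space and on any truncation containing several number sectors. Perron--Frobenius therefore gives no uniqueness there. Worse, at $J=0$, which Theorem~\ref{th:energy cvg} allows, $H_{1,z}$ is diagonal. For $\mu=U$, every configuration with all occupation numbers in $\{1,2\}$ is a ground state, so most ground states are not symmetric.

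The repair needs neither uniqueness nor truncation. Fix a sector $\mathcal{N}_0+\sum_i\mathcal{N}_i=n$. It is finite-dimensional and invariant under $H_{1,z}$, under all $U_\sigma$, and under $\psi\mapsto|\psi|$ (coefficient-wise in the occupation basis). Its minimal energy $E_n$ is attained. If $\psi$ is a minimizer, then so is $|\psi|$ by the sign of the hopping (after the gauge if $J<0$), hence $|\psi|$ lies in the $E_n$-eigenspace. This eigenspace is linear and permutation invariant, so it contains $\sum_\sigma U_\sigma|\psi|$, which is symmetric and nonzero as a sum of nonnegative vectors. Since any $\psi$ splits orthogonally over sectors, both infima equal $\inf_n E_n$. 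For $J\neq 0$ you could instead use Perron--Frobenius uniqueness sector by sector, because hopping through the core connects all configurations with the same total number.

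Alternatively, the averaging route you abandoned does work once you subtract the unrestricted infimum $E_0$ (finite for $U>0$). Suppose $\|\psi\|=1$ and $\langle\psi,H_{1,z}\psi\rangle\le E_0+\epsilon$, and put $\bar\psi=\frac{1}{z!}\sum_\sigma U_\sigma|\psi|$. The triangle inequality for $\phi\mapsto\|(H_{1,z}-E_0)^{1/2}\phi\|$ gives $\langle\bar\psi,(H_{1,z}-E_0)\bar\psi\rangle\le\langle|\psi|,(H_{1,z}-E_0)|\psi|\rangle\le\epsilon$. Combined with your bound $\|\bar\psi\|^2\ge 1/z!$, the Rayleigh quotient of $\bar\psi$ is at most $E_0+z!\,\epsilon$. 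This suffices because $z$ is fixed.
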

\begin{proof} We find
    \begin{align*}
        \sum_{x \in V} (H_{1, z})_{x, V_z^x}
            =&\ \sum_{x \in V} \sum_{y \in V_z^x} \prth{- J \prth{a_x^* a_y + a_y^* a_x} + (J-\mu) \prth{\mcN_x + \mcN_y} + \frac{U}{2} \prth{\mcN_x(\mcN_x-1) + \mcN_y(\mcN_y-1)}} \\
            =&\ 2z H_{V_z}.
    \end{align*}
    Let $\psi \in \mcF_{V_z}$ with $\norm{\psi}_{\mcF_{V_z}} = 1$. We then have
    \begin{align*}
        \frac{\bk{\psi}{H_{V_z} \psi}}{\abs{V_z}}
            =&\ \frac{1}{2z \abs{V_z}} \sum_{x \in V} \bk{\psi}{(H_{1, z})_{x, V_z^x} \psi}
            = \frac{1}{2z \abs{V_z}} \sum_{x \in V} \Tr{\PTr{V \backslash (x, V_z^x)}{\ket{\psi}\bra{\psi}} H_{1, z}} \\
            =&\ \frac{\Tr{\ket{\psi}\bra{\psi}^{(1, z)} H_{1, z}}}{2z}
            \ge \inf_{\substack{\psi_{1, z} \in \ell^2(\mbC)^{\otimes(z+1)} \\ \norm{\psi_{1, z}}_{\ell^2}=1}} \frac{\bk{\psi_{1, z}}{H_{1, z} \psi_{1, z}}}{2z}.
    \end{align*}
    The first equality holds due to the symmetry of $H_{1, z}$ with respect to its last $z$ variables.
\end{proof}

Next, we estimate a useful combination of number operators. We introduce the following notation.
\begin{definition}[Moments operators]\label{def:moments}
We define
    \begin{align*}
    \forall A \in \mcL^1\prth{\ell^2(\mbC)},\ A_s \coloneq \frac{1}{z}\sum_{i=1}^z A_i,\quad
    \forall \beta\in \mbR_+,\ \mcM_\beta
        \coloneq \sum_{i=1}^z \frac{\mcN_0^\beta + \mcN_i^\beta}{z}
        = \mcN_0^\beta + (\mcN^\beta)_s.
    \end{align*}
    This defines operators on $\ell^2(\mbC) \otimes \ell^2(\mbC)^{\otimes_+ z}$, and we also extend them, sector-wise, to the Fock space
    \begin{align*}
        \ell^2(\mbC) \otimes \mcF_+\prth{\ell^2(\mbC)} \simeq \bigoplus_{z\in \mbN} \ell^2(\mbC) \otimes \ell^2(\mbC)^{\otimes_+ z}.
    \end{align*}
\end{definition}

The Hamiltonian \eqref{eq:Hz} can then be written as
\begin{align}
    \frac{H_{1, z}}{z} = -J\prth{a_0^* a_s + a_s^* a_0} + \prth{J-\mu -\frac{U}{2}}\mcM_1 + \frac{U}{2}\mcM_2. \label{eq:Hz/z}
\end{align}
We note the following bound.

\begin{proposition}
    Let $0 \le \beta_1 \le \beta_2$. Then,
    \begin{align}
        \mcM_{\beta_1} \le 2^{1-\frac{\beta_1}{\beta_2}} \mcM_{\beta_2}^{\frac{\beta_1}{\beta_2}}. \label{eq:M holder}
    \end{align}
\end{proposition}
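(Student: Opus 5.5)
The operators $\mcN_0$ and $\mcN_i$ ($i\in\intint{1,z}$) all commute and are simultaneously diagonal in the occupation-number basis $\ket{n_0}\otimes\ket{n_1}\otimes\dots\otimes\ket{n_z}$. The symmetric subspace $\ell^2(\mbC)\otimes\ell^2(\mbC)^{\otimes_+ z}$ is invariant under $\mcM_\beta$, and the Fock space extension is sector-wise. It therefore suffices to prove \eqref{eq:M holder} as an inequality between functions of the joint spectrum. Both sides are functions of the same commuting family, so operator inequality reduces to the pointwise scalar inequality
\begin{align*}
    n_0^{\beta_1} + \frac{1}{z}\sum_{i=1}^z n_i^{\beta_1}
    \le 2^{1-\frac{\beta_1}{\beta_2}} \prth{n_0^{\beta_2} + \frac{1}{z}\sum_{i=1}^z n_i^{\beta_2}}^{\frac{\beta_1}{\beta_2}}
\end{align*}
for all $n_0,\dots,n_z\in\mbN$. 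Here the right-hand side is understood via functional calculus of the positive operator $\mcM_{\beta_2}$. The case $\beta_2=0$ forces $\beta_1=0$ and is trivial, since both sides equal $2$. We may therefore assume $\beta_2>0$ and set $p\coloneq \beta_2/\beta_1\ge 1$, the case $\beta_1=0$ again being trivial.

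For the scalar inequality we use concavity of $t\mapsto t^{1/p}$ on $\mbR_+$ twice. First, Jensen's inequality for the uniform average over $i$ gives
\[
\frac1z\sum_i n_i^{\beta_1}=\frac1z\sum_i (n_i^{\beta_2})^{1/p}\le \Big(\frac1z\sum_i n_i^{\beta_2}\Big)^{1/p}.
\]
Write $A\coloneq n_0^{\beta_2}$ and $B\coloneq \frac1z\sum_i n_i^{\beta_2}$. The left side is then bounded by $A^{1/p}+B^{1/p}$. Second, concavity gives $\frac{A^{1/p}+B^{1/p}}{2}\le\prth{\frac{A+B}{2}}^{1/p}$, that is, $A^{1/p}+B^{1/p}\le 2^{1-1/p}(A+B)^{1/p}$, which is exactly the claimed bound with exponent $1/p=\beta_1/\beta_2$.

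The only step needing mild care is the passage from the scalar inequality to the operator inequality, including on unbounded domains. Both operators are diagonal in the same orthonormal basis, and the symmetrized basis vectors are still joint eigenvectors because the eigenvalues are permutation invariant in $(n_1,\dots,n_z)$. Hence the inequality holds in the sense of quadratic forms on the form domain of $\mcM_{\beta_1}$, with $\mcM_{\beta_2}^{\beta_1/\beta_2}$ defined by spectral calculus. No real obstacle is expected beyond this bookkeeping.
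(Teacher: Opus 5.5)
Your proof is correct and is essentially the paper's argument. The paper writes $\mcM_{\beta_1}$ as $\frac{1}{z}$ times a sum of $2z$ commuting terms, with $\mcN_0^{\beta_1}$ repeated $z$ times, and applies Hölder (the power-mean inequality) once; you split the same concavity bound into a shell-average Jensen step and a two-term step, and your explicit reduction to the joint spectrum in the occupation-number basis justifies the operator-level use of Hölder that the paper leaves implicit.
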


\begin{proof}
    Let
    \begin{align*}
        \forall i \in \intint{1:2z},\quad A_i \coloneq \syst{\mcN_0^{\beta_1} \text{ if $i$ is odd,} \\ \mcN_{\frac{i}{2}}^{\beta_1} \text{ if $i$ is even.}}
    \end{align*}
    By Hölder's inequality,
    \begin{align*}
        \mcM_{\beta_1} 
            = \frac{1}{z}\sum_{i=1}^z A_i 
            \le \frac{1}{z} (2z)^{1-\frac{\beta_1}{\beta_2}} \, \prth{\sum_{i=1}^{2z} A_i^{\frac{\beta_2}{\beta_1}}}^{\frac{\beta_1}{\beta_2}}
            = 2^{1-\frac{\beta_1}{\beta_2}} \, \prth{\sum_{i=1}^z \frac{\mcN_0^{\beta_2} + \mcN_i^{\beta_2}}{z}}^{\frac{\beta_1}{\beta_2}}
            = 2^{1-\frac{\beta_1}{\beta_2}} \, \mcM_{\beta_2}^{\frac{\beta_1}{\beta_2}}.
    \end{align*}
\end{proof}

We can then prove an energy estimate for the moment $\mcM_2$. For $x\in \mbR$, we set $x_\pm \coloneq \max(0, \pm x)$.
\begin{proposition}[$\mcM_2$ energy estimate]\label{prop:M2 esti}
    If $U > 0$, then
    \begin{align*}
        &2J_- + \mu + \frac{U}{2} \le 0 \implies \mcM_2 \le \frac{2}{U}\,\frac{H_{1, z}}{z},\\
        &2J_- + \mu + \frac{U}{2} > 0 \implies \mcM_2 \le \frac{4}{U}\,\frac{H_{1, z}}{z} + 2\prth{\frac{4J_- + 2\mu}{U}+1}^2.
    \end{align*}
\end{proposition}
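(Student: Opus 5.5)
We plan to bound the hopping term in \eqref{eq:Hz/z} from below by a multiple of $\mcM_1$, which turns $H_{1,z}/z$ into a scalar quadratic-type expression in $\mcM_1$ and $\mcM_2$. We then control $\mcM_1$ by $\mcM_2^{1/2}$ via \eqref{eq:M holder} and finish with Young's inequality. All inequalities are understood as quadratic form inequalities on vectors with finitely many particles, which form a core for every operator involved.

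\emph{Step 1 (hopping term).} Since $(a_0 \mp a_s)^*(a_0 \mp a_s) \ge 0$, we have
\[
\pm\prth{a_0^* a_s + a_s^* a_0} \le a_0^* a_0 + a_s^* a_s .
\]
For $J \ge 0$ we use the upper sign, and for $J<0$ the lower sign, to get
\[
-J\prth{a_0^* a_s + a_s^* a_0} \ge -\abs{J}\prth{\mcN_0 + a_s^* a_s}.
\]
Next, the operator Cauchy--Schwarz inequality $\prth{\sum_{i=1}^z a_i}^*\prth{\sum_{i=1}^z a_i} \le z \sum_{i=1}^z a_i^* a_i$ gives $a_s^* a_s \le (\mcN)_s$. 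Hence the hopping term is bounded below by $-\abs{J}\mcM_1$. Inserting this into \eqref{eq:Hz/z} and using $-\abs{J} + J = -2J_-$, we obtain
\[
\frac{H_{1,z}}{z} \ge -c\,\mcM_1 + \frac{U}{2}\mcM_2, \qquad c \coloneq 2J_- + \mu + \frac{U}{2}.
\]

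\emph{Step 2 (case $c \le 0$).} Since $\mcM_1 \ge 0$, the term $-c\,\mcM_1$ is nonnegative. Therefore $H_{1,z}/z \ge \frac{U}{2}\mcM_2$, which is the first claim.

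\emph{Step 3 (case $c>0$).} Apply \eqref{eq:M holder} with $\beta_1=1$ and $\beta_2=2$ to get $\mcM_1 \le \sqrt{2}\,\mcM_2^{1/2}$. All number operators commute, so we may use the joint functional calculus and the scalar Young inequality $c\sqrt{2}\,t \le \frac{U}{4}t^2 + \frac{2c^2}{U}$ with $t = \mcM_2^{1/2}$. This yields
\[
\frac{H_{1,z}}{z} \ge \frac{U}{4}\mcM_2 - \frac{2c^2}{U}.
\]
Rearranging gives $\mcM_2 \le \frac{4}{U}\frac{H_{1,z}}{z} + \frac{8c^2}{U^2}$. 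Finally, $\frac{8c^2}{U^2} = 2\prth{\frac{4J_-+2\mu}{U}+1}^2$, which is exactly the stated constant.

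The main point of care is Step 1. The sign of $J$ must be handled so that, after adding the $J\,\mcM_1$ term from \eqref{eq:Hz/z}, exactly $-2J_-$ remains. One also has to justify $a_s^*a_s \le (\mcN)_s$ on the symmetric space $\mcF_{1,z}$. The remaining steps are routine manipulations of commuting operators.
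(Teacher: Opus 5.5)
Your proof is correct and follows essentially the same route as the paper: you reduce to $\frac{U}{2}\mcM_2 \le \frac{H_{1,z}}{z} + \prth{2J_- + \mu + \frac{U}{2}}\mcM_1$, then use \eqref{eq:M holder} with $\beta_1=1$, $\beta_2=2$, and Young's inequality with the same choice of constants. The only cosmetic difference is that the paper obtains $\abs{a_0^*a_s + a_s^*a_0} \le \mcM_1$ in one step from $\frac{1}{z}\sum_{i=1}^z (a_0^* \pm a_i^*)(a_0 \pm a_i) \ge 0$, whereas you pass through $a_s^*a_s \le (\mcN)_s$; that inequality holds on the full tensor product, so it needs no separate justification on the symmetric subspace.
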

\begin{proof}
    Noticing that
    \begin{align}
        0 \le \frac{1}{z}\sum_{i=1}^z (a_0^* \pm a_i^*)(a_0 \pm a_i)
            = \mcM_1 \pm \prth{a_0^* a_s + a_s^* a_0} 
            \implies&\ - \mcM_1 \le a_0^* a_s + a_s^* a_0 \le \mcM_1,  \nonumber\\
            \implies&\ \abs{a_0^* a_s + a_s^* a_0} \le \mcM_1,  \label{eq:laplacian CS}
    \end{align}
    and using \eqref{eq:M holder} with $\beta_1 \coloneq 1, \beta_2 \coloneq 2$ together with \eqref{eq:Hz/z}, we estimate, for $\epsilon > 0$,
    \begin{align*}
        \frac{U}{2}\mcM_2  
            =&\ \frac{H_{1, z}}{z} + J\prth{a_0^* a_s + a_s^* a_0}
                - \prth{J - \mu - \frac{U}{2}}\mcM_1
            \le \frac{H_{1, z}}{z}
                + \prth{2J_- + \mu + \frac{U}{2}}\mcM_1 \\
            \le&\ \frac{H_{1, z}}{z}
                + \prth{2J_- + \mu + \frac{U}{2}}_+ \sqrt{2 \mcM_2}
            \le \frac{H_{1, z}}{z}
                + \prth{2J_- + \mu + \frac{U}{2}}_+
                    \prth{\frac{1}{2\epsilon} + \epsilon\mcM_2}.
    \end{align*}
    If $2J_- + \mu + \frac{U}{2} \le 0$, one has
    \begin{align*}
        \mcM_2 \le \frac{2}{U}\frac{H_{1, z}}{z}.
    \end{align*}
    Otherwise, choosing $\epsilon \coloneq \frac{U}{4\prth{2J_- + \mu + \frac{U}{2}}}$, we find
    \begin{align*}
        \mcM_2 \le \frac{4}{U}\,\frac{H_{1, z}}{z}
            + 2\prth{\frac{4J_- + 2\mu}{U}+1}^2.
    \end{align*}
\end{proof}

\begin{remark}[Domain of $H_{1, z}$]\label{rk:N sobolev}
    If $\mcH$ is a separable Hilbert space and $A \in \mcL\prth{\mcH}$ satisfies $A \ge \mathds{1}_{\mcH}$, then the Sobolev space
    \begin{align*}
        \prth{\mcL^{1, A}\prth{\mcH} \coloneq
        \sett{\gamma \in \mcL^1(\mcH) \ \big| \ \sqrt{A} \, \gamma \, \sqrt{A} \in \mcL^1\prth{\mcH}},\
        \norm{\bullet}_{\mcL^{1, A}}\coloneq \norm{\sqrt{A} \, \bullet \, \sqrt{A}}_{\mcL^1}}
    \end{align*}
    is a Banach space whose topological pre-dual is
    \begin{align*}
        \prth{
            \sqrt{A} \, \mcK(\mcH) \, \sqrt{A},\
            \norm{A^{-\frac{1}{2}}
                \bullet
                A^{-\frac{1}{2}}}_{\mcL^\infty}
        }.
    \end{align*}
    Using \eqref{eq:laplacian CS} and \eqref{eq:M holder} for the Hamiltonian \eqref{eq:Hz/z}, we obtain
    \begin{align*}
        \frac{H_{1, z}}{z}
            \le \prth{\abs{J} + J - \mu - \frac{U}{2}}\mcM_1 + \frac{U}{2}\mcM_2
            \le \prth{2J_+ - \mu - \frac{U}{2}}\sqrt{2\mcM_2}
                + \frac{U}{2}\mcM_2.
    \end{align*}
    This shows that $H_{1, z}$ has domain $\mcL^{1, \mcM_2 + \mathds{1}}\prth{\ell^2(\mbC) \otimes \ell^2(\mbC)^{\otimes_+ z}}$. We denote
    \begin{align*}
        \forall \beta \in \mbR_+,\ \mcL^{1, \beta} \coloneq \mcL^{1, \mcM_\beta + \mathds{1}}.
    \end{align*}
\end{remark}

An important step toward the lower bound is the weak$^*$ lower semicontinuity of $H_{1, z}$ on its domain.

\begin{proposition}[Weak$^*$ lower semicontinuity of $H_{1, z}$]\label{prop:w*lsc Hz}
    Let $k \in \mbN^*$ and
    \begin{align*}
        (\gamma_i)_{i\in \mbN} \subseteq \mcL^{1, 2}_+\prth{\ell^2(\mbC) \otimes \ell^2(\mbC)^{\otimes_+ k}}
    \end{align*}
    such that
    \begin{align*}
        \gamma_i \wslim\displaylimits_{i\to\infty} \gamma \in \mcL^{1, 2}\prth{\ell^2(\mbC) \otimes \ell^2(\mbC)^{\otimes_+ k}}.
    \end{align*}
    If $U \ge 0$, then for all $\beta \in [0, 2[$, we have strong convergence in $\mcL^{1, \beta}\prth{\ell^2(\mbC) \otimes \ell^2(\mbC)^{\otimes_+ k}}$ and
    \begin{align*}
        \frac{\Tr{\gamma H_{1, k}}}{k} \le \liminf_{i\to\infty} \frac{\Tr{\gamma_i H_{1, k}}}{k}.
    \end{align*}
\end{proposition}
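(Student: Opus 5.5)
We prove the two claims in turn: first the strong convergence in $\mcL^{1,\beta}$, then the lower semicontinuity, which follows from it. Throughout, $\mcH\coloneq\ell^2(\mbC)\otimes\ell^2(\mbC)^{\otimes_+ k}$, $A_\beta\coloneq \mcM_\beta+\mathds{1}$, and weak$^*$ convergence in $\mcL^{1,2}$ is understood through the pre-dual of Remark~\ref{rk:N sobolev}. In particular $\Tr{\gamma_i K}\to\Tr{\gamma K}$ for every $K\in \sqrt{A_2}\,\mcK(\mcH)\sqrt{A_2}$.

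\textbf{Step 1 (uniform bound and tightness).} By the uniform boundedness principle, $C\coloneq\sup_i\Tr{A_2\gamma_i}<\infty$. Let $P_n$ be the spectral projector of $\mcM_1$ (equivalently of the total particle number $\mcN_0+\sum_{j=1}^k\mcN_j$, which is comparable to it for fixed $k$) onto values $\le n$. It has finite rank on $\mcH$, since only finitely many occupation configurations have bounded total number. The operators $\mcM_\beta$ and $P_n$ all commute, being functions of the occupation numbers. On $\operatorname{Ran}(\mathds{1}-P_n)$ we have $\mcM_\beta+\mathds{1}\le \epsilon_n(\mcM_2+\mathds{1})$ with $\epsilon_n\to0$ for $\beta<2$. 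This follows from \eqref{eq:M holder}: $\mcM_\beta\le 2\mcM_2^{\beta/2}$, and $\mcM_2\ge \mcM_1^2/2$ is large there. Hence
\begin{align*}
\sup_i\Tr{A_\beta(\mathds{1}-P_n)\gamma_i(\mathds{1}-P_n)}\le \epsilon_n C,
\end{align*}
and the same bound holds for $\gamma$ by lower semicontinuity of positive traces.

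\textbf{Step 2 (strong convergence).} Split
\begin{align*}
\sqrt{A_\beta}\,(\gamma_i-\gamma)\sqrt{A_\beta}
\end{align*}
into the block $P_n\cdot P_n$, the block $(\mathds{1}-P_n)\cdot(\mathds{1}-P_n)$, and the two off-diagonal blocks. On the finite-rank range of $P_n$, weak$^*$ convergence gives trace-norm convergence, because the matrix elements against a finite basis are tested by operators of the form $\sqrt{A_2}K\sqrt{A_2}$. The diagonal tail block has trace norm at most $2\epsilon_nC$ by positivity. The off-diagonal blocks are controlled by the Cauchy--Schwarz bound for positive operators,
\begin{align*}
\norm{P\sqrt{A_\beta}\gamma_i\sqrt{A_\beta}Q}_{\mcL^1}\le \Tr{A_\beta P\gamma_iP}^{1/2}\,\Tr{A_\beta Q\gamma_iQ}^{1/2}\le C^{1/2}(\epsilon_n C)^{1/2}.
\end{align*}
Letting first $i\to\infty$ and then $n\to\infty$ gives $\norm{\gamma_i-\gamma}_{\mcL^{1,\beta}}\to0$.

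\textbf{Step 3 (lower semicontinuity).} Write $H_{1,k}/k$ as in \eqref{eq:Hz/z}:
\begin{align*}
\frac{H_{1,k}}{k}=-J\prth{a_0^*a_s+a_s^*a_0}+\prth{J-\mu-\frac{U}{2}}\mcM_1+\frac{U}{2}\mcM_2.
\end{align*}
\begin{itemize}
\item The hopping term satisfies $\abs{a_0^*a_s+a_s^*a_0}\le\mcM_1$ by \eqref{eq:laplacian CS}. Hence $A_1^{-1/2}(a_0^*a_s+a_s^*a_0)A_1^{-1/2}$ is bounded, and its trace against $\gamma_i$ converges by the strong $\mcL^{1,1}$ convergence from Step 2.
\item The $\mcM_1$ term converges for the same reason.
\item For the $\mcM_2$ term with $U\ge0$, write $\mcM_2=\sup_n \mcM_2P_n$, an increasing family of finite-rank positive operators. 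Each $\Tr{\gamma_i\mcM_2P_n}$ converges, so
\begin{align*}
\Tr{\gamma\mcM_2P_n}=\lim_{i\to\infty}\Tr{\gamma_i\mcM_2P_n}\le\liminf_{i\to\infty}\Tr{\gamma_i\mcM_2}.
\end{align*}
Taking the supremum over $n$ by monotone convergence yields $\Tr{\gamma\mcM_2}\le\liminf_i\Tr{\gamma_i\mcM_2}$.
\end{itemize}
Summing the three contributions gives the claimed inequality.

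The main obstacle is the tightness estimate in Step 1. It requires that spectral cutoffs of $\mcM_1$ have finite rank and that $\mcM_\beta/\mcM_2\to0$ uniformly at large particle number. For fixed $k$ both hold, since $\mcM_2\ge \mcM_1^2/2\gtrsim n^2/k^2$ on $\operatorname{Ran}(\mathds{1}-P_n)$.
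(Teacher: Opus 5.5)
Your proof is correct, but you reach the strong convergence by a more hands-on route than the paper. The paper's argument runs as follows:
\begin{itemize}
\item It observes that $(\mcM_2+\mathds{1})^{-1/2}\mcM_\beta(\mcM_2+\mathds{1})^{-1/2}\le 2^{1-\beta/2}(\mcM_2+\mathds{1})^{-(2-\beta)/2}$ is compact, so $\mcM_\beta$ lies in the pre-dual $\sqrt{\mcM_2+\mathds{1}}\,\mcK\,\sqrt{\mcM_2+\mathds{1}}$.
\item Hence $\Tr{\gamma_i\mcM_\beta}\to\Tr{\gamma\mcM_\beta}$, which for positive operators means the $\mcL^{1,\beta}$ norms converge. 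Strong convergence then follows from the abstract fact, cited from Simon, that weak$^*$ convergence plus convergence of trace norms implies trace-norm convergence.
\item The hopping term is handled directly by weak$^*$ convergence, since its sandwich by $(\mcM_2+\mathds{1})^{-1/2}$ is dominated by a compact operator.
\item The $\mcM_2$ term is handled by weak$^*$ lower semicontinuity of the $\mcL^{1,2}$ norm, together with convergence of the mass.
\end{itemize}
You instead prove that norm-convergence lemma by hand in this commuting, diagonal setting, using finite-rank spectral projections $P_n$, a uniform tail bound, and Cauchy--Schwarz for the off-diagonal blocks. Because you then have strong $\mcL^{1,1}$ convergence, you only need boundedness, not compactness, of $A_1^{-1/2}(a_0^*a_s+a_s^*a_0)A_1^{-1/2}$. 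You also replace the dual-norm lower semicontinuity by monotone finite-rank truncations $\mcM_2P_n\uparrow\mcM_2$.

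Both arguments rest on the same input: $\mcM_2$ has compact resolvent for fixed $k$, and $\mcM_\beta$ is relatively small with respect to it when $\beta<2$. The paper's version is shorter. Yours is self-contained and gives explicit tail control.

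Two harmless details. First, positivity of the limit $\gamma$, which your tail and off-diagonal bounds for $\gamma$ use, follows by testing against rank-one projections; these lie in the pre-dual. Second, for $\beta=0$ one has $\mcM_0=2\cdot\mathds{1}$, so $A_0\le 3A_2$ rather than $A_0\le A_2$. This only changes a constant in your off-diagonal estimate.
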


\begin{proof}
    Let $\beta \in [0, 2[$. Using \eqref{eq:M holder},
    \begin{align*}
        \prth{\mcM_2 + \mathds{1}}^{-\frac{1}{2}} \mcM_\beta \prth{\mcM_2 + \mathds{1}}^{-\frac{1}{2}} 
            \le&\ 2^{1-\frac{\beta}{2}} \mcM_2^{\frac{\beta}{2}}\prth{\mcM_2 + \mathds{1}}^{-1} \\
            \le&\ 2^{1-\frac{\beta}{2}} \prth{\mcM_2 + \mathds{1}}^{-\frac{2 - \beta}{2}} \in \mcK\prth{\ell^2(\mbC) \otimes \ell^2(\mbC)^{\otimes_+ k}},
    \end{align*}
    hence $\mcM_\beta \in \sqrt{\mcM_2 + \mathds{1}} \, \mcK\prth{\ell^2(\mbC) \otimes \ell^2(\mbC)^{\otimes_+ k}} \, \sqrt{\mcM_2 + \mathds{1}}$, and we can pass to the limit in
    \begin{align}
        \Tr{\gamma_i \mcM_\beta} \cvgto_{i\to\infty}& \Tr{\gamma \mcM_\beta}. \label{eq:beta mass lim}
    \end{align}
    The case $\beta = 0$ yields $\Tr{\gamma_i} \cvgto\limits_{i\to\infty} \Tr{\gamma}$, and since $\gamma \ge 0$ (non-negativity passes to the limit), we also have
    \begin{align*}
        \norm{\gamma_i}_{\mcL^{1, \beta}} 
            = \Tr{\gamma_i \prth{\mcM_\beta + \mathds{1}}}
            \cvgto_{i\to\infty} \Tr{\gamma \prth{\mcM_\beta + \mathds{1}}}
            = \norm{\gamma}_{\mcL^{1, \beta}},
    \end{align*}
    hence the strong convergence (see \cite{simon2005trace} for a reference). Similarly,
    \begin{align*}
        \prth{\mcM_2 + \mathds{1}}^{-\frac{1}{2}} \abs{a_0^*a_s + a_s^* a_0} \prth{\mcM_2 + \mathds{1}}^{-\frac{1}{2}} 
            \le \prth{\mcM_2 + \mathds{1}}^{-\frac{1}{2}} \mcM_1 \prth{\mcM_2 + \mathds{1}}^{-\frac{1}{2}} \in \mcK\prth{\ell^2(\mbC) \otimes \ell^2(\mbC)^{\otimes_+ k}},
    \end{align*}
    therefore
    \begin{align}
        \Tr{\gamma_i \prth{a_0^*a_s + a_s^* a_0}} \cvgto_{i\to\infty}& \Tr{\gamma \prth{a_0^*a_s + a_s^* a_0}}. \label{eq:kim part}
    \end{align}
    Moreover,
    \begin{align}
        \Tr{\gamma \mcM_2} 
            =&\ \Tr{\prth{\mcM_2 + \mathds{1}}^{\frac{1}{2}} \gamma \prth{\mcM_2 + \mathds{1}}^{\frac{1}{2}}} - \Tr{\gamma} 
            = \norm{\gamma}_{\mcL^{1, 2}} - \Tr{\gamma}
            \le \liminf_{i\to\infty} \norm{\gamma_i}_{\mcL^{1, 2}} - \Tr{\gamma} \nonumber\\
            =&\ \liminf_{i\to\infty} \prth{\norm{\gamma_i}_{\mcL^{1, 2}} - \Tr{\gamma_i}}
            = \liminf_{i\to\infty}\Tr{\gamma_i \mcM_2}. \label{eq:M2 part}
    \end{align}
    We conclude by combining \eqref{eq:kim part}, \eqref{eq:beta mass lim} with $\beta = 1$, and \eqref{eq:M2 part} in \eqref{eq:Hz/z}, using that $U \ge 0$.
\end{proof}

\subsection{Lower energy bound}\label{ch_proof_of_GS_Convergence}

We are now ready to turn to the proof of the main result.

\begin{proof}[Proof of Theorem~\ref{th:energy cvg}]

    With Proposition \ref{prop:UB} in mind we only need to gather the elements for the proof of the lower bound.

    Let $(\psi_{1, z})_{z\in \mbN}$ be a minimizing sequence of 
    \begin{align*}
        \lim_{z\to\infty} \inf_{\substack{\psi_{1, z} \in \mcF_{1, z} \\ \norm{\psi_{1, z}}_{\mcF_{1, z}}=1}} \frac{\bk{\psi_{1, z}}{H_{1, z} \psi_{1, z}}}{2z}.
    \end{align*}
    
    Taking the vacuum as a trail state, one sees that the energy is negative. Denote $\gamma_{1, z} \coloneq \ket{\psi_{1, z}} \bra{\psi_{1, z}}$, then Proposition \ref{prop:M2 esti}, with $J \ge 0$, implies that 
    \begin{align}
        \Tr{\gamma_{1, z} \mcM_2} \le 2\prth{\frac{2\mu}{U}+1}^2. \label{eq:M2 final bound}
    \end{align}
    Recalling Definition \ref{def:moments}, we find
    \begin{align*}
        \forall k\in \intint{0:z},\ \Tr{\gamma_{1, z} \mcM_2} 
            = \Tr{\gamma_{1, z}^{(1, 1)} \prth{\mcN_0^2 + \mcN_1^2}}
            = \Tr{\gamma_{1, z}^{(1, 1)} \mcM_2} = \Tr{\gamma_{1, z}^{(1, k)} \mcM_2},
    \end{align*}
    so $\gamma_{1, z}^{(1, k)}$ is uniformly bounded in $z$ inside $\mcL^{1, 2}\prth{\ell^2(\mbC) \otimes \ell^2(\mbC)^{\otimes_+ k}}$, as defined in Remark \ref{rk:N sobolev}. After a diagonal extraction, we get
    \begin{align*}
        \forall k\in \mbN,\ \gamma_{1, z}^{(1, k)} \wslim\displaylimits_{z\to\infty} \gamma^{(1, k)} \in \mcL_+^{1, 2}\prth{\ell^2(\mbC) \otimes \ell^2(\mbC)^{\otimes_+ k}}.
    \end{align*}
    The limit stays consistent in $k$. With Proposition \ref{prop:w*lsc Hz}, the mass is preserved in the limit so $\prth{\gamma^{(1, k)}}_{k\in\mbN}$ is a $(1, \infty)$-bosonic state (see Definition \ref{def:inf bosons states}). By setting $P_m \coloneq \mathds{1}_{\mcN \le m}$ for $m\in \mbN$, we verify the last assumption of Theorem~\ref{th:QDF}. We find
    \begin{align*}
        0
            \le 1 - \Tr{\gamma^{(0, 1)} P_m} 
            = \Tr{\gamma^{(0, 1)} \mathds{1}_{\mcN > m}} 
            \le \frac{\Tr{\gamma^{(0, 1)} \mcN^2}}{(m+1)^2}
            \le \frac{\Tr{\gamma_{1, z} \mcM_2}}{(m+1)^2}  \cvgto_{m\to\infty} 0
    \end{align*}
    due to \eqref{eq:M2 final bound}, which grants us the existence of
    \begin{itemize}
        \item a probability measure $\mbP \in \mcM\prth{S_{\ell^2(\mbC)},\mbR_+}$,
        \item a function $\zeta \in L^1\prth{S_{\ell^2(\mbC)}, \mcL^1_+(\ell^2(\mbC))}$ satisfying, $\mbP$-a.e., $\Tr{\zeta} = 1$,
    \end{itemize}
    such that
    \begin{align}
        \forall k\in \mbN^*,\ \gamma^{(1, k)} = \intr_{S_{\ell^2(\mbC)}} \zeta(u) \otimes p_u^{\otimes k}  \ d\mbP(u).
    \end{align}
    Let $k \in \mbN^*$, then as a consequence of Proposition \ref{prop:w*lsc Hz},
    \begin{align}
        \liminf_{z\to\infty} \frac{\Tr{\gamma_{1, z} H_{1, z}}}{z}
            = \liminf_{z\to\infty} \frac{\Tr{\gamma_{1, z}^{(1, k)} H_{1, k}}}{k}
            \ge \frac{\Tr{\gamma^{(1, k)} H_{1, k}}}{k}
            = \intr_{S_{\ell^2(\mbC)}} \frac{\Tr{\zeta(u) \otimes p_u^{\otimes k} \ H_{1, k}}}{k}  \ d\mbP(u). \label{eq:DF application}
    \end{align}
    With a final computation involving \eqref{eq:Hz},
    \begin{align*}\begin{split}
        \frac{\Tr{\zeta(u) \otimes p_u^{\otimes k} \ H_{1, k}}}{k} 
            =&\ - J \prth{\Tr{\zeta(u) a^*} \Tr{p_u a} + \Tr{\zeta(u) a} \Tr{p_u a^*}} \\
                &+ \prth{J-\mu - \frac{U}{2}}\prth{\Tr{\zeta(u) \mcN} + \Tr{p_u \mcN}}
                + \frac{U}{2} \prth{\Tr{\zeta(u) \mcN^2} + \Tr{p_u \mcN^2}}.
    \end{split}\end{align*}
    Let $\alpha_\zeta \coloneq \Tr{\zeta a}$, then
    \begin{align*}
        \abs{\overline{\alpha_\zeta(u)} \alpha_u  + \overline{\alpha_u}\alpha_\zeta(u)} \le \abs{\alpha_\zeta(u)}^2 + \abs{\alpha_u}^2,
    \end{align*}
    so recalling \eqref{mf_energy_func} and using $J \ge 0$ we find
    \begin{align}
        \frac{\Tr{\zeta(u) \otimes p_u^{\otimes k} \ H_{1, k}}}{k}  \ge 2 \inf_{\substack{\varphi \in \ell^2(\mbC) \\ \norm{\varphi}_{\ell^2} = 1}} \bk{\varphi}{h_\varphi \varphi}. \label{eq:win}
    \end{align}
    Using Proposition \ref{prop:TIR} and then \eqref{eq:DF application} and \eqref{eq:win}, we obtain the lower energy bound
    \begin{align*}
        \liminf_{z\to\infty} \inf_{\substack{\psi \in \mcF \\ \norm{\psi}_{\mcF}=1}} \frac{\bk{\psi}{H_{V_z} \psi}}{\abs{V_z}}
            \ge&\ \liminf_{z\to\infty} \inf_{\substack{\psi \in \mcF_{1, z} \\ \norm{\psi}_{\mcF_{1, z}}=1}} \frac{\bk{\psi}{H_{1, z} \psi}}{2z}
            =  \liminf_{z\to\infty} \frac{\Tr{\gamma_{1, z} H_{1, z}}}{2z} \\
            \ge&\ \intr_{S_{\ell^2(\mbC)}} \frac{\Tr{\zeta(u) \otimes p_u^{\otimes k} \ H_{1, k}}}{2k}  \ d\mbP(u)
            \ge  \intr_{S_{\ell^2(\mbC)}} \inf_{\substack{\varphi \in \ell^2(\mbC) \\ \norm{\varphi}_{\ell^2} = 1}} \bk{\varphi}{h_\varphi \varphi}  \ d\mbP(u) \\
            \ge&\ \inf_{\substack{\varphi \in \ell^2(\mbC) \\ \norm{\varphi}_{\ell^2} = 1}} \bk{\varphi}{h_\varphi \varphi}.
    \end{align*}

\end{proof}

\section{Polaron-type quantum de Finetti theorems}\label{sec_de_Finetti}

\subsection{Finite dimensional approximation}\label{sec_Finetti_finite}

Following the previous subsection, we develop a de Finetti theorem for a system with one distinct core particle and a symmetric shell of $N \in \mbN$ particles. We start by assuming that the shell Hilbert space has finite dimension, namely $\mbC^{m+1}$ with $m \in \mbN$. Let $\mcH_0$ be a separable Hilbert space representing the core particle. This setting is similar to the one of the polaron model where one has an impurity in a bath of indistinguishable particles. This is why we expect that our methods might be applicable to proving mean-field limits for such models as well.

We define the symmetric projection
\begin{align*}
    \Pi_N^+ \coloneq \prth{\frac{1}{N!}\sum_{\sigma\in S_N} U_\sigma} : \prth{\mbC^{m+1}}^{\otimes N} \to \prth{\mbC^{m+1}}^{\otimes_+ N},
\end{align*}
where $S_N$ is the permutation group on $N$ elements and $U_\sigma$ is the unitary defined by
\begin{align*}
    \forall \psi \in \prth{\mbC^{m+1}}^{\otimes N},\ x_{1:N} \in \prth{\mbC^{m+1}}^N,\ U_\sigma \psi (x_{1:N}) \coloneq \psi\prth{x_{\sigma(1)}, \dots, x_{\sigma(N)}}.
\end{align*}

Denote by $S^m \subseteq \mbC^{m+1}$ the complex $m$-sphere, with $h_m$ being the normalized Haar measure on $S^m$. We recall the Schur formula:
\begin{align}
    \Pi_N^+ = \prth{\matrx{N+m \\ m}}\intr_{S^m} p_u^{\otimes N} \, dh_m(u), \label{eq:Schur}
\end{align}
where $p_u$ is the orthogonal projection onto $u \in \mbC^{m+1}$. This is a consequence of Schur's lemma applied to the following irreducible representation:
\begin{align*}
    \matrx{&\mathcal{U}_{m+1}(\mbC)&\to &\textnormal{End}\prth{(\mbC^{m+1})^{\otimes_+ N}} \\ &U&\mapsto& U^{\otimes N}.}
\end{align*}
Our first result is the following.

\begin{theorem}[Polaron quantum de Finetti for finite $N$]\label{th:finite DF}
    Let $\gamma_{1, N} \in \mcL\prth{\mcH_0 \otimes\prth{\mbC^{m+1}}^{\otimes_+ N}}$ be a $(1, N)$-bosonic state (see Definition \ref{def:mult bosons}). Then
    \begin{align}
        \eta_{1, N} \coloneq \prth{\matrx{N+m \\ m}}\intr_{S^m} \prth{\mathds{1}_{\mcH_0}\otimes p_u^{\otimes N} \ \gamma_{1, N}}^{(1, 0)}  \otimes p_u^{\otimes N} dh_m(u) \in \mcL_+^1\prth{\mcH_0 \otimes\prth{\mbC^{m+1}}^{\otimes_+ N}} \label{eq:eta def}
    \end{align}
    satisfies $\Tr{\eta_{1, N}} = 1$ and
    \begin{align}
        \norm{\gamma_{1, N}^{(1, k)} - \eta_{1, N}^{(1, k)}}_{\mcL^1} \le \frac{4mk}{N+1}. \label{eq:DF trace estimate}
    \end{align}
\end{theorem}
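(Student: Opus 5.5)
The plan is to adapt the Christandl--König--Mitchison--Renner argument for the finite-dimensional quantum de Finetti theorem, treating the core factor $\mcH_0$ as a passive spectator. Throughout, $(\mathds{1}_{\mcH_0}\otimes p_u^{\otimes N}\,\gamma_{1,N})^{(1,0)}$ denotes the partial trace over the shell, i.e. the operator $\zeta_N(u)\coloneq \PTr{(\mbC^{m+1})^{\otimes N}}{(\mathds{1}\otimes p_u^{\otimes N})\gamma_{1,N}}$. By cyclicity of the partial trace in the shell variables this equals $\PTr{}{(\mathds{1}\otimes p_u^{\otimes N})\gamma_{1,N}(\mathds{1}\otimes p_u^{\otimes N})}$, so $\zeta_N(u)\ge 0$.

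\textbf{Step 1: $\eta_{1,N}$ is a state.} Positivity of $\eta_{1,N}$ follows from $\zeta_N(u)\ge 0$. Each summand $\zeta_N(u)\otimes p_u^{\otimes N}$ lives on $\mcH_0\otimes(\mbC^{m+1})^{\otimes_+N}$, since $u^{\otimes N}$ is symmetric. For the trace we compute
\[
\Tr{\eta_{1,N}}=\binom{N+m}{m}\intr_{S^m}\Tr{(\mathds{1}\otimes p_u^{\otimes N})\gamma_{1,N}}\,dh_m(u)=\Tr{(\mathds{1}\otimes\Pi_N^+)\gamma_{1,N}}=1,
\]
using the Schur formula \eqref{eq:Schur} and the fact that $\gamma_{1,N}$ is supported on the symmetric subspace.

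\textbf{Step 2: operator identity for the reduced matrices.} Computing $\eta_{1,N}^{(1,k)}$ gives $\binom{N+m}{m}\int \zeta_N(u)\otimes p_u^{\otimes k}\,dh_m(u)$. The key identity is a version of the CKMR formula with the spectator attached. It expresses $\eta_{1,N}^{(1,k)}$ as the partial trace over $N-k$ shell variables of
\[
\frac{\binom{N+m}{m}}{\binom{N+m}{m}}\ \cdots,
\]
more precisely of
\[
\binom{N+m}{m}\intr_{S^m}\prth{\mathds{1}_{\mcH_0}\otimes p_u^{\otimes k}\otimes p_u^{\otimes(N-k)}}\gamma_{1,N}\prth{\mathds{1}_{\mcH_0}\otimes p_u^{\otimes k}\otimes p_u^{\otimes(N-k)}}\,dh_m(u).
\]
We then compare this with $\gamma_{1,N}^{(1,k)}$ via a Schur formula on the first $k$ factors. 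Concretely, $p_u^{\otimes N}=p_u^{\otimes k}\otimes p_u^{\otimes (N-k)}$, and the integral $\binom{N+m}{m}\int p_u^{\otimes N}(\cdot)\,dh_m$ can be rewritten using the measure-and-prepare channel of CKMR. The approach is to prove the operator inequality (the CKMR lemma)
\[
\eta_{1,N}^{(1,k)}\ \ge\ \frac{\binom{N-k+m}{m}}{\binom{N+m}{m}}\ \gamma_{1,N}^{(1,k)} \quad (\text{up to the correct normalising ratio, i.e. } \eta^{(1,k)}\ge c_{N,k}\,\gamma^{(1,k)} + \text{positive}),
\]
with $c_{N,k}=\binom{N+m}{m}^{-1}\binom{N-k+m}{m}$ or the analogous ratio. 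This follows by inserting $\Pi^+_{N-k}=\binom{N-k+m}{m}\int p_u^{\otimes(N-k)}dh_m$ on the traced-out variables, since $\Tr_{N-k}\Pi_{N-k}^+(\cdot)=\Tr_{N-k}(\cdot)$ on symmetric states. The operators acting on $\mcH_0$ commute with everything on the shell, so the spectator factor does not affect the representation-theoretic computation.

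\textbf{Step 3: trace norm bound.} With $A=\gamma^{(1,k)}$ and $B=\eta^{(1,k)}$, both of trace one, we write $B = cA + (1-c)\sigma$ with $\sigma\ge0$ of trace one. Then $\norm{A-B}_{\mcL^1}\le 2(1-c)$. Moreover
\[
1-c=1-\prod_{j=1}^{m}\frac{N-k+j}{N+j}\le \sum_{j=1}^m\frac{k}{N+j}\le\frac{mk}{N+1}.
\]
This yields $2mk/(N+1)$, comfortably within the claimed $4mk/(N+1)$. The factor $4$ leaves room if the inequality in Step 2 only holds after an additional symmetrisation, giving two such error terms.

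\textbf{Main obstacle.} The hard part is Step 2: obtaining the exact operator inequality with the spectator $\mcH_0$ included. The concrete difficulty is that $\zeta_N(u)$ is defined through the full $p_u^{\otimes N}$ sandwich and not a partial one. What I would try first is to show the identity $\gamma_{1,N}^{(1,k)}$-comparison at the level of the $(1,N)$ state: prove
\[
\binom{N+m}{m}\int(\mathds{1}\otimes p_u^{\otimes N})\gamma(\mathds{1}\otimes p_u^{\otimes N})\,dh_m
\]
has the same partial traces as the CKMR twirl by expanding $\gamma$ in an operator basis of $\mcH_0$, i.e. $\gamma=\sum_{ij}\ket{e_i}\bra{e_j}\otimes\gamma_{ij}$. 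Each block $\gamma_{ij}$ is a trace-class operator on the symmetric shell space, to which the scalar CKMR argument applies linearly. Positivity is then recovered at the end from the block form.
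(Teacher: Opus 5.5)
\textbf{Gap: the Step~2 operator inequality is asserted, not proved.} Your Step~2 inequality $\eta_{1,N}^{(1,k)}\ge c\,\gamma_{1,N}^{(1,k)}$ with $c=\binom{N-k+m}{m}/\binom{N+m}{m}$ carries the entire estimate. You leave it unproved, and you even leave the constant open (``or the analogous ratio'', ``after an additional symmetrisation'').

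Inserting $\Pi^+_{N-k}$ on the traced-out variables only produces two representations. Let $\gamma_k(u)$ be the partial trace over the last $N-k$ shell variables of $(\mathds{1}\otimes\mathds{1}^{\otimes k}\otimes p_u^{\otimes(N-k)})\gamma_{1,N}$. Then $\gamma_{1,N}^{(1,k)}=\binom{N-k+m}{m}\int\gamma_k(u)\,dh_m(u)$ and $\eta_{1,N}^{(1,k)}=\binom{N+m}{m}\int(\mathds{1}\otimes p_u^{\otimes k})\,\gamma_k(u)\,(\mathds{1}\otimes p_u^{\otimes k})\,dh_m(u)$. The second is the factorisation \eqref{eq:core idea}. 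The ``full versus partial sandwich'' issue you worry about is harmless: for $\gamma_{1,N}=|\Psi\rangle\langle\Psi|$ one has $(\mathds{1}\otimes p_u^{\otimes N})\Psi=\psi_u\otimes u^{\otimes N}$ with $\psi_u\in\mathcal{H}_0$.

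These two integrals cannot be compared pointwise in $u$. With $P=\mathds{1}\otimes p_u^{\otimes k}$ and $c'>0$, the inequality $P\gamma_k(u)P\ge c'\gamma_k(u)$ forces $\gamma_k(u)$ to be supported in $\mathcal{H}_0\otimes\mathbb{C}u^{\otimes k}$, which is false in general.

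Your block-decomposition fallback does not close the gap either. A scalar inequality valid for all $\rho\ge 0$ only says that $\rho\mapsto\eta^{(k)}(\rho)-c\,\rho^{(k)}$ is a positive map. The blocks $\gamma_{ij}$ are not positive. Reassembling the blockwise statements into $\eta_{1,N}^{(1,k)}-c\,\gamma_{1,N}^{(1,k)}\ge 0$ requires complete positivity of that map, which is exactly the spectator statement you set out to prove.

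\textbf{How to close it.} The inequality is true, and a Cauchy--Schwarz argument proves it directly with the spectator included. By linearity take $\gamma_{1,N}=|\Psi\rangle\langle\Psi|$. For $\Phi\in\mathcal{H}_0\otimes(\mathbb{C}^{m+1})^{\otimes k}$ set $w\coloneq(\langle\Phi|\otimes\mathds{1})\Psi\in(\mathbb{C}^{m+1})^{\otimes_+(N-k)}$ and, if $w\neq 0$, $\chi\coloneq w/\|w\|$. Three facts then combine:
\begin{itemize}
\item By \eqref{eq:Schur}, $\binom{N+m}{m}^{-1}\|w\|=\int\langle\Phi|\psi_u\otimes u^{\otimes k}\rangle\,\langle\chi|u^{\otimes(N-k)}\rangle\,dh_m(u)$.
\item $\int|\langle\chi|u^{\otimes(N-k)}\rangle|^2\,dh_m(u)=\binom{N-k+m}{m}^{-1}$.
\item $\langle\Phi|\eta_{1,N}^{(1,k)}|\Phi\rangle=\binom{N+m}{m}\int|\langle\Phi|\psi_u\otimes u^{\otimes k}\rangle|^2\,dh_m(u)$.
\end{itemize}
Cauchy--Schwarz then gives $\langle\Phi|\eta_{1,N}^{(1,k)}|\Phi\rangle\ge c\,\|w\|^2=c\,\langle\Phi|\gamma_{1,N}^{(1,k)}|\Phi\rangle$. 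With this lemma, your Step~3 yields the bound $2mk/(N+1)$, which is better than \eqref{eq:DF trace estimate}.

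\textbf{Comparison with the paper.} The paper needs no operator inequality. It uses the exact identity $\binom{N+m}{m}\int(\mathds{1}\otimes p_u^{\otimes k})\gamma_k(u)\,dh_m(u)=\gamma_{1,N}^{(1,k)}$. This makes the one-sided error $\int(\mathds{1}\otimes(1-p_u^{\otimes k}))\gamma_k(u)\,dh_m(u)$ an explicit nonnegative multiple of $\gamma_{1,N}^{(1,k)}$. It then splits $\gamma_k-P\gamma_kP$ into three such terms, which together with the normalisation defect gives $4(1-c)$.
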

\begin{proof}
    First, $\eta_{1, N}$ is positive as a sum of positive operators. Then, using \eqref{eq:Schur},
    \begin{align}
        \Tr{\eta_{1, N}} 
            = \prth{\matrx{N+m \\ m}}\intr_{S^m} \Tr{\mathds{1}_{\mcH_0}\otimes p_u^{\otimes N} \ \gamma_{1, N}} dh_m(u)
            = \Tr{\mathds{1}_{\mcH_0} \otimes \Pi_N^+\ \gamma_{1, N}}
            = \Tr{\gamma_{1, N}}
            = 1. \label{eq:tr eta}
    \end{align}
    With the notation $\mathds{1} \coloneq \mathds{1}_{\mbC^{m+1}}$, we start by expressing $\gamma_{1, N}^{(1, k)}$ using \eqref{eq:Schur}:
    \begin{align}
        \gamma_{1, N}^{(1, k)}
            = \prth{\mathds{1}_{\mcH_0} \otimes \mathds{1}^{\otimes k} \otimes \Pi_{N-k}^+\ \gamma_{1, N}}^{(1, k)} 
            = \prth{\matrx{N-k+m \\ m}}\intr_{S^m} \underbrace{\prth{\mathds{1}_{\mcH_0} \otimes \mathds{1}^{\otimes k} \otimes p_u^{\otimes (N-k)} \gamma_{1, N}}^{(1, k)}}_{\eqcolon \gamma_k(u)\in\mcL_+^1\prth{\mcH_0\otimes \prth{\mbC^{m+1}}^{\otimes_+ k}}}dh_m(u). \label{eq:gammaNk}
    \end{align}
    We claim that
    \begin{align}
        \mathds{1}_{\mcH_0}\otimes p_u^{\otimes k}\ \gamma_k(u)\ \mathds{1}_{\mcH_0}\otimes p_u^{\otimes k} 
            = \gamma_0(u) \otimes p_u^{\otimes k}. \label{eq:core idea}
    \end{align}
    Indeed, if $A \in \mathcal{K}(\mcH_0)$ and $B\in \mathcal{K}\prth{\prth{\mbC^{m+1}}^{\otimes k}}$,
    \begin{align*}
        \Tr{A \otimes B \ \mathds{1}_{\mcH_0}\otimes p_u^{\otimes k}\ \gamma_k(u)\ \mathds{1}_{\mcH_0}\otimes p_u^{\otimes k}}
            =&\ \Tr{A \otimes p_u^{\otimes k} B p_u^{\otimes k} \ \gamma_k(u)} \\
            =&\ \Tr{A \otimes p_u^{\otimes k} \ \gamma_k(u)} \Tr{p_u^{\otimes k} B} \\
            =&\ \Tr{A \otimes p_u^{\otimes N} \ \gamma_{1, N}} \Tr{p_u^{\otimes k} B} 
            = \Tr{A \gamma_0(u)} \Tr{p_u^{\otimes k} B} \\
            =&\ \Tr{A\otimes B \ \gamma_0(u) \otimes p_u^{\otimes k}}.
    \end{align*}
    Inserting \eqref{eq:core idea} into \eqref{eq:eta def},
    \begin{align}
        \eta_{1, N}^{(1, k)} 
            =&\ \prth{\matrx{N+m \\ m}} \intr_{S^m} \gamma_0(u)\otimes p_u^{\otimes k} dh_m(u)
            = \prth{\matrx{N+m \\ m}} \intr_{S^m} \mathds{1}_{\mcH_0}\otimes p_u^{\otimes k}\ \gamma_k(u)\ \mathds{1}_{\mcH_0}\otimes p_u^{\otimes k} dh_m(u) \nonumber\\
            =&\ \prth{\matrx{N-k+m \\ m}} \intr_{S^m} \mathds{1}_{\mcH_0}\otimes p_u^{\otimes k}\ \gamma_k(u)\ \mathds{1}_{\mcH_0}\otimes p_u^{\otimes k} dh_m(u) + \prth{1 - \frac{\prth{\matrx{N-k+m\\m}}}{\prth{\matrx{N+m\\m}}}} \eta_{1, N}^{(1, k)}. \label{eq:gammaNktilde}
    \end{align}
    Moreover, using \eqref{eq:Schur},
    \begin{align}
        \prth{\matrx{N+m\\m}}\intr_{S_m} \mathds{1}_{\mcH_0} \otimes p_u^{\otimes k} \ \gamma_k(u) dh_m(u)
            = \prth{\matrx{N+m\\m}} \intr_{S_m} \prth{\mathds{1}_{\mcH_0}\otimes p_u^{\otimes N} \ \gamma_{1, N}}^{(1, k)}dh_m(u)
            = \gamma_{1, N}^{(1, k)}. \label{eq:gammaNkk}
    \end{align}
    As $\gamma_k(u) \ge 0$, 
    \begin{align*}
        &\norm{\ \intr_{S^m} \mathds{1}_{\mcH_0}\otimes (1-p_u^{\otimes k})\ \gamma_k(u) \ \mathds{1}_{\mcH_0}\otimes (1-p_u^{\otimes k}) dh_m(u)}_{\mcL^1} \\
            =&\ \Tr{\ \intr_{S^m} \mathds{1}_{\mcH_0}\otimes (1-p_u^{\otimes k})\ \gamma_k(u) \ \mathds{1}_{\mcH_0}\otimes (1-p_u^{\otimes k}) dh_m(u)}\\
            =&\ \Tr{\ \intr_{S^m} \mathds{1}_{\mcH_0}\otimes (1-p_u^{\otimes k})\ \gamma_k(u) dh_m(u)} 
            \le \norm{\ \intr_{S^m} \mathds{1}_{\mcH_0}\otimes (1-p_u^{\otimes k})\ \gamma_k(u) dh_m(u)}_{\mcL^1}.
    \end{align*}
    Hence, combining
    \begin{align*}
        &\gamma_k(u) - \mathds{1}_{\mcH_0}\otimes p_u^{\otimes k}\ \gamma_k(u)\ \mathds{1}_{\mcH_0}\otimes p_u^{\otimes k} \\
            =&\ \mathds{1}_{\mcH_0}\otimes (1-p_u^{\otimes k})\ \gamma_k(u) + \gamma_k(u) \ \mathds{1}_{\mcH_0}\otimes (1-p_u^{\otimes k})  - \mathds{1}_{\mcH_0}\otimes (1-p_u^{\otimes k})\ \gamma_k(u) \ \mathds{1}_{\mcH_0}\otimes (1-p_u^{\otimes k}),
    \end{align*}
    with \eqref{eq:gammaNk} and \eqref{eq:gammaNkk}, we obtain
    \begin{align}
        &\norm{\intr_{S_m} \prth{\gamma_k(u) - \mathds{1}_{\mcH_0}\otimes p_u^{\otimes k}\ \gamma_k(u)\ \mathds{1}_{\mcH_0}\otimes p_u^{\otimes k} }dh_m(u)}_{\mcL^1} 
            \le 3 \norm{\intr_{S^m} \mathds{1}_{\mcH_0}\otimes (1-p_u^{\otimes k})\ \gamma_k(u) dh_m(u)}_{\mcL^1} \nonumber\\
            =&\ 3\norm{\prth{\prth{\matrx{N-k+m\\m}}^{-1} - \prth{\matrx{N+m\\m}}^{-1}}\gamma_{1, N}^{(1, k)}}_{\mcL^1}
            = 3\prth{\prth{\matrx{N-k+m\\m}}^{-1} - \prth{\matrx{N+m\\m}}^{-1}}. \label{eq:square diff}
    \end{align}
    With \eqref{eq:gammaNk} and \eqref{eq:gammaNktilde}, 
    \begin{align*}
        \gamma_{1, N}^{(1, k)} - \eta_{1, N}^{(1, k)} 
            =&\ \prth{\matrx{N-k+m \\ m}}\intr_{S_m} \prth{\gamma_k(u) - \mathds{1}_{\mcH_0}\otimes p_u^{\otimes k}\ \gamma_k(u)\ \mathds{1}_{\mcH_0}\otimes p_u^{\otimes k} }dh_m(u) \\
                &+ \prth{1 - \frac{\prth{\matrx{N-k+m\\m}}}{\prth{\matrx{N+m\\m}}}} \eta_{1, N}^{(1, k)}.
    \end{align*}
    Then, inserting \eqref{eq:square diff} and \eqref{eq:tr eta} along with
    \begin{align*}
         1&\ge \frac{\prth{\matrx{N-k+m\\m}}}{\prth{\matrx{N+m\\m}}}
            =
         \frac{(N-k+m)! N!}{(N-k)!(N+m)!}
            = \prod_{i=1}^m \frac{N-k+i}{N+i}
            \ge \prth{\frac{N-k+1}{N+1}}^{m} 
            = \prth{1-\frac{k}{N+1}}^{m} \\ 
            &\ge 1 -  \frac{mk}{N+1},
    \end{align*}
    we get
    \begin{align}
        \norm{\gamma_{1, N}^{(1, k)} - \eta_{1, N}^{(1, k)}}_{\mcL^1}
            \le 4 \prth{1 - \frac{\prth{\matrx{N-k+m\\m}}}{\prth{\matrx{N+m\\m}}}}
            \le 4\frac{mk}{N+1}.  \label{eq:L1 esti}
    \end{align}
\end{proof}
\begin{remark}
    Some comments:
    \begin{itemize}
        \item Our intuition is that \eqref{eq:core idea} is the main novelty compared to the usual de Finetti argument. \emph{A priori}, we only expect factorization to occur between the many symmetric variables, and not between the impurity and the rest. However, since we are able to project the symmetric part onto a rank-one projection $p_u^{\otimes k}$, the resulting $(1,k)$-variable operators exhibit a full factorization.
        \item De Finetti theorems have already been used for multi-species systems \cite{MNO}, with a large number of symmetric particles in each species. Our De Finetti theorem can also be generalized to this setting: one impurity and multiple species with a large number of particles in each species.
    \end{itemize}
\end{remark}

We now consider the limit $N \to \infty$ while keeping the dimension $m+1$ of the Hilbert space of the symmetric particles fixed. Note the following.

\begin{definition}[$S^1$-invariance] \label{def:S1 inv}
    Let $\zeta$ be a trace-class valued functions and $\mbP$ a probability measure on a Hilbert space. $(\zeta, \mbP)$ is said to be $S^1$ invariant if 
    \begin{align*}
        \forall \theta \in \mbR,\ e^{i\theta}_* \mbP = \mbP \text{ and } \mbP\text{-a.e.}, \zeta = \zeta \circ e^{i\theta},
    \end{align*}
    where the phase $e^{i\theta}$ is set to act multiplicatively on every complex coordinates of the Hilbert space.
\end{definition}

We denote by $\mcM$ the set of Radon measures and $C^0$ spaces of continuous functions.

\begin{theorem}[Polaron quantum de Finetti for $N=\infty$] \label{th:QDFm}
        Let $\gamma_m \in \mcL\prth{\mcH_0 \otimes  \mcF_+\prth{\mbC^{m+1}}}$ be a $(1,\infty)$-bosonic state. Then there exist a unique $S^1$-invariant
        \begin{itemize}
            \item probability measure $\mbP_m \in \mcM\prth{S^m,\mbR_+}$,
            \item function $\zeta_m \in L^1\prth{S^m, \mcL^1_+(\mcH_0)}$ satisfying, $\mbP_m$-a.e., $\Tr{\zeta_m} = 1$,
        \end{itemize}
        such that
        \begin{align}
            \forall k\in \mbN,\ \gamma_m^{(1, k)} = \intr_{S^m} \zeta_m(u)\otimes p_u^{\otimes k}  \ d\mbP_m(u). \label{eq:limit DF}
        \end{align}
        Moreover, up to a subsequence, we have the following approximation:
        \begin{align}
            \prth{\matrx{N+m \\ m}}\prth{\mathds{1}_{\mcH_0}\otimes p_\bullet^{\otimes N} \ \gamma_m^{(1, N)}}^{(1, 0)} h_m \wslim\displaylimits_{N\to\infty} \zeta_m \mbP_m  \text{ as trace-class valued Radon measures.}\label{eq:zeta P cvg}
        \end{align}
\end{theorem}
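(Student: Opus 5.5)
We apply Theorem~\ref{th:finite DF} to $\gamma_{1,N}\coloneq\gamma_m^{(1,N)}$ for every $N$ and pass to the limit $N\to\infty$ in the measure appearing in \eqref{eq:eta def}. For each $N$ we define the trace-class valued measure on $S^m$
\begin{align*}
    \nu_N \coloneq \prth{\matrx{N+m \\ m}}\prth{\mathds{1}_{\mcH_0}\otimes p_\bullet^{\otimes N} \ \gamma_m^{(1, N)}}^{(1, 0)} h_m .
\end{align*}
It takes values in $\mcL^1_+(\mcH_0)$, and by \eqref{eq:tr eta} its total trace is $1$. Its scalar part $\mbP^N\coloneq \Tr{\nu_N}$ is therefore a probability measure on the compact set $S^m$. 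The definition of $\eta_{1,N}$ then reads $\eta_{1,N}^{(1,k)}=\int_{S^m} d\nu_N(u)\otimes p_u^{\otimes k}$, and \eqref{eq:DF trace estimate} gives $\norm{\gamma_m^{(1,k)}-\int d\nu_N\otimes p_u^{\otimes k}}_{\mcL^1}\le 4mk/(N+1)\to0$ for each fixed $k$. Here the consistency of $\gamma_m$ guarantees that $\gamma_{1,N}^{(1,k)}=\gamma_m^{(1,k)}$ does not depend on $N$.

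Compactness is obtained as follows. The space $\mcM(S^m,\mcL^1(\mcH_0))$ of bounded trace-class valued Radon measures is the dual of $C^0(S^m,\mcK(\mcH_0))$, and the $\nu_N$ have total variation $1$. Banach--Alaoglu, together with separability of the predual, therefore yields a subsequence with $\nu_N \wslim \nu$. This is \eqref{eq:zeta P cvg}, once $\nu$ is written as $\zeta_m\mbP_m$. Positivity of $\nu$ passes to the limit.

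The main obstacle is the possible loss of mass in the $\mcH_0$ direction, since $\mcH_0$ may be infinite dimensional and weak-$*$ limits only give $\Tr{\nu(S^m)}\le1$. I would recover the mass from the case $k=1$. Testing the $\mcL^1$ convergence against $K\otimes B$ with $K\in\mcK(\mcH_0)$ and $B$ a matrix on $\mbC^{m+1}$ gives $\Tr{\gamma_m^{(1,1)}K\otimes B}=\lim_N\int \Tr{K\,d\nu_N(u)}\,\bk{u}{Bu}$. Since $u\mapsto K\bk{u}{Bu}$ lies in $C^0(S^m,\mcK)$, the right-hand side converges to $\int\Tr{K\,d\nu}\bk{u}{Bu}$. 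Hence $\gamma_m^{(1,k)}=\int d\nu\otimes p_u^{\otimes k}$ for every $k$, first weakly and then as an identity, because both sides are trace class. Taking traces gives $\Tr{\nu(S^m)}=\Tr{\gamma_m^{(1,k)}}=1$, so no mass is lost.

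We then disintegrate. Set $\mbP_m\coloneq\Tr{\nu}$, which is a probability measure. The measure $\nu$ is absolutely continuous with respect to $\mbP_m$: indeed $0\le\nu(A)$ and $\norm{\nu(A)}_{\mcL^1}=\mbP_m(A)$. Because $\mcL^1(\mcH_0)$ is a separable dual space, it has the Radon--Nikodym property, so there is a density $\zeta_m\in L^1(S^m,\mcL^1_+(\mcH_0))$ with $\Tr{\zeta_m}=1$ $\mbP_m$-a.e. Working weakly, one can instead take densities of the scalar measures $\Tr{K\nu}$ for $K$ in a countable dense set.

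For uniqueness in the $S^1$-invariant class, the operators $p_u^{\otimes k}$ only see $u$ modulo phase. Averaging $(\zeta_m,\mbP_m)$ over $e^{i\theta}$ therefore preserves \eqref{eq:limit DF}, and the averaged pair is $S^1$-invariant; the construction is also already invariant, since $h_m$ is. Now suppose two invariant pairs give the same $\gamma_m^{(1,k)}$ for all $k$. Testing against $K\otimes B$ with $B$ ranging over $\mcL((\mbC^{m+1})^{\otimes_+k})$ determines the integrals $\int \Tr{K\zeta}\,\bk{u^{\otimes k}}{Bu^{\otimes k}}\,d\mbP$. These functions $\bk{u^{\otimes k}}{Bu^{\otimes k}}$ are exactly the polynomials in $u,\bar u$ of bidegree $(k,k)$. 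Multiplying by $\norm{u}^{2}=1$ shows that all bidegrees $(j,j)$ with $j\le k$ are included. Their span over all $k$ is dense in the $S^1$-invariant continuous functions by Stone--Weierstrass on $S^m/S^1$. Hence the measures $\Tr{K\zeta}\mbP$ are determined for all $K$. This fixes $\nu=\zeta_m\mbP_m$, and thus both $\mbP_m=\Tr{\nu}$ and $\zeta_m$ $\mbP_m$-a.e.
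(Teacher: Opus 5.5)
Your proposal is correct and follows essentially the same route as the paper. It uses the same measures $\nu_N=\zeta_{N,m}h_m$ and Banach--Alaoglu in $C^0(S^m,\mcK(\mcH_0))^*$. It identifies the limit through \eqref{eq:DF trace estimate}, recovers the full mass by taking traces, applies Radon--Nikodym, inherits $S^1$-invariance from $h_m$, and proves uniqueness via Stone--Weierstrass on $S^m/S^1$. The only cosmetic difference is that you determine the whole trace-class valued measure $\zeta_m\mbP_m$ at once by testing against $K\otimes B$, whereas the paper first invokes the standard uniqueness of $\mbP_m$ and then compares the densities.
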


\begin{proof}
    Let $N\in\mbN$, we define
    \begin{align}
        \zeta_{N, m}: \matrx{S^m&\to&\mcL_+^1\prth{\mcH_0} \\ u&\mapsto & \prth{\matrx{N+m \\ m}}\prth{\mathds{1}_{\mcH_0}\otimes p_u^{\otimes N} \ \gamma_m^{(1, N)}}^{(1, 0)}} \label{eq:zNM}
    \end{align}
    and observe that, using \eqref{eq:Schur},
    \begin{align*}
        \norm{\zeta_{N, m}}_{L^1} 
            =&\ \intr_{S^m} \norm{\zeta_{N, m}(u)}_{\mcL^1}dh_m(u)
            = \prth{\matrx{N+m \\ m}} \intr_{S^m} \Tr{\mathds{1}_{\mcH_0}\otimes p_u^{\otimes N} \ \gamma_m^{(1, N)}} dh_m(u) \\
            =&\ \Tr{\mathds{1}_{\mcH_0}\otimes\Pi_N^+ \ \gamma_m^{(1, N)}}
            = \Tr{\gamma_m^{(1, N)}}
            = 1.
    \end{align*}
    Thus
    \begin{align*}
         \prth{\zeta_{N, m} h_m}_N \subseteq  \mcM\prth{S^m, \mcL^1(\mcH_0)} = C^0\prth{S^m, \mcK(\mcH_0)}^*
    \end{align*}
    is a bounded sequence of trace-class-valued Radon measures. By the Banach--Alaoglu theorem, up to a subsequence,
    \begin{align}
        \zeta_{N, m} h_m \wslim\displaylimits_{N\to\infty} Z_m \in \mcM\prth{S^m, \mcL_+^1(\mcH_0)}. \label{eq:zeta lim}
    \end{align}
    Moreover,
    \begin{align}
        \mbP_m \coloneq \norm{Z_m}_{\mcL^1} = \Tr{Z_m} \in \mcM\prth{S^m,\mbR_+}  \label{eq:def Pm}
    \end{align}
    has bounded total variation since
    \begin{align*}
        \Tr{Z_m(S^m)} \le \liminf_{N\to\infty} \norm{\zeta_{N, m}}_{L^1} = 1.
    \end{align*}
    As $Z_m \ll \mbP_m$, by the Radon--Nikodym theorem,
    \begin{align*}
        \exists \zeta_m \in L^1\prth{S^m, \mcL^1_+(\mcH_0)} \text{ such that } Z_m = \zeta_m \mbP_m,
    \end{align*}
    and \eqref{eq:zeta lim} implies \eqref{eq:zeta P cvg}. Taking the trace, it follows from \eqref{eq:def Pm} that
    \begin{align*}
        \mbP_m = \Tr{\zeta_m} \mbP_m,
    \end{align*}
    meaning that $\mbP_m$-a.e., $\Tr{\zeta_m} = 1$. Then \eqref{eq:zeta lim} has the following meaning:
    \begin{align}
        \forall \varphi \in C^0\prth{S^m, \mcK(\mcH_0)},\ \intr_{S^m} \Tr{\zeta_{N, m}(u)\varphi(u)} dh_m(u)
            \cvgto\displaylimits_{N\to\infty} \intr_{S^m}\Tr{\zeta_m(u)\varphi(u)}d\mbP_m(u). \label{eq:wcvg fn}
    \end{align}
    With the same notation as in Theorem~\ref{th:finite DF} applied to $\gamma_m^{(1, N)}$, we define
    \begin{align*}
        \eta_{1, N} \coloneq\intr_{S^m} \zeta_{N, m}(u) \otimes p_u^{\otimes N} dh_m(u).
    \end{align*}
    Let $k\in\mbN$ and $K \in \mcK\prth{\mcH_0 \otimes\prth{\mbC^{m+1}}^{\otimes_+ k}}$. Considering the test function
    \begin{align*}
        u \mapsto \prth{\mathds{1}_{\mcH_0} \otimes p_u^{\otimes k}\ K}^{(1, 0)} \in C^0\prth{S^m, \mcK(\mcH_0)},
    \end{align*}
    in \eqref{eq:wcvg fn}, we obtain
    \begin{align*}
        \Tr{\eta_{1, N}^{(1, k)} K} 
            =&\ \intr_{S^m} \Tr{\zeta_{N, m}(u) \otimes p_u^{\otimes k} \ K}dh_m(u)
            = \intr_{S^m} \Tr{\zeta_{N, m}(u) \prth{\mathds{1}_{\mcH_0} \otimes p_u^{\otimes k}\ K}^{(1, 0)} }dh_m(u) \\
            \cvgto\displaylimits_{N\to\infty}&\ \intr_{S^m} \Tr{\zeta_m(u)\prth{\mathds{1}_{\mcH_0} \otimes p_u^{\otimes k}\ K}^{(1, 0)}} d\mbP_m(u)
            = \intr_{S^m} \Tr{\zeta_m(u) \otimes p_u^{\otimes k}\ K} d\mbP_m(u) \\
            =&\ \Tr{\prth{\ \intr_{S^m} \zeta_m(u) \otimes p_u^{\otimes k}  \ d\mbP_m(u)} K}.
    \end{align*}
    Hence,
    \begin{align*}
        \eta_{1, N}^{(1, k)} \wslim\displaylimits_{N\to\infty} \intr_{S^m} \zeta_m(u) \otimes p_u^{\otimes k}  \ d\mbP_m(u).
    \end{align*}
    Using \eqref{eq:DF trace estimate}, and observing that
    \begin{align*}
        \norm{\eta_{1, N}^{(1, k)} - \gamma_m^{(1, k)}}_{\mcL^1}
            =\norm{\eta_{1, N}^{(1, k)} - \prth{\gamma_m^{(1, N)}}^{(1, k)}}_{\mcL^1}
            \cvgto_{N\to\infty} 0,
    \end{align*}
    we conclude that
    \begin{align*}
        \gamma_m^{(1, k)} 
            = \intr_{S^m} \zeta_m(u) \otimes p_u^{\otimes k}  \ d\mbP_m(u).
    \end{align*}
    Finally, taking the trace,
    \begin{align*}
        1 = \Tr{\gamma_m^{(1, k)}} 
            = \intr_{S^m} \Tr{\zeta_m(u)} \Tr{p_u}^k \, d\mbP_m(u)
            = \intr_{S^m} d\mbP_m(u)
            = \mbP_m(S^m),
    \end{align*}
    so $\mbP_m$ is a probability measure.

\textbf{$S^1$-Invariance.}
    From \eqref{eq:zeta P cvg} we infer that $\zeta_m \mbP_m$ is $S^1$-invariant as a trace-class-valued measure, being the limit of such measures. Taking the trace, this implies that $(\zeta_m, \mbP_m)$ is $S^1$-invariant in the sense of Definition \ref{def:S1 inv}.
    
    Tracing out the first variable in \eqref{eq:limit DF}, we reduce the situation to the known case where $\mathds{P}_m$ is the unique $S^1$-invariant probability measure on $S^m$ satisfying $\eqref{eq:limit DF}$. Then, if there exists another candidate $\widetilde{\zeta}_m \in L^1\prth{S^m, \mcL^1_+(\mcH_0)}$, we find that 
    \begin{align*}
        \forall k \in \mbN,\ \intr_{S^m} \prth{\zeta_m(u) - \widetilde{\zeta}_m(u)} \otimes p_u^{\otimes k} \ d\mbP_m(u) = 0.
    \end{align*}
    Using the standard arguments for the uniqueness of $\mbP_m$, i.e., testing the above against $\mathds{1}_{\mcH_0} \otimes \bigotimes_{i=1}^k A_i$ for $A_{1:k} \in \mcL\prth{\mbC^{m+1}}$ self-adjoint and $k\in \mbN$, we find, by density (Stone--Weierstrass theorem) of the algebra generated by functions of the form
    \begin{align*}
        u \mapsto \prod_{i=1}^k \bk{u}{A_i u} \in C^0\prth{\quotient{S^m}{S^1}, \mbR},
    \end{align*}
    that $\forall f \in C^0\prth{\quotient{S^m}{S^1}, \mbR}$,
    \begin{align*}
        \intr_{S^m} \prth{\zeta_m(u) - \widetilde{\zeta}_m(u)} f(u) \ d\mbP_m(u) = 0.
    \end{align*}
    This implies that
    \begin{align*}
        \prth{\zeta_m - \widetilde{\zeta}_m} \mbP_m = 0
    \end{align*}
    as a trace-class-valued measure, meaning that $\mbP_m$-a.e., $\zeta_m = \widetilde{\zeta}_m$. 
\end{proof}

\subsection{Fock space localization}\label{sec_Fock_space_loc}

In order to deal with the finite-dimensional approximation in \eqref{th:QDFm}, and replace $\mbC^{m+1}$ by a separable Hilbert space $\mcH_s$, we use the Fock space localization method, see \cite{LEWIN2014570}.

\begin{proposition}[Fock space localization]\label{prop:loc}
    Let $\gamma_{1, N} \in \mcL\prth{\mcH_0 \otimes\prth{\mcH_s}^{\otimes_+ N}}$ be a $(1, N)$-bosonic state, $P$ be an orthonormal projection on $\mcH_s$, and $Q \coloneq \mathds{1}_{\mcH_s} - P$. The trace-class valued measure
    \begin{align*}
        M_{N, k} \coloneq \sum_{n=k}^N \prth{\matrx{N \\ n}}\prth{\mathds{1}_{\mcH_0}\otimes P^{\otimes n} \otimes Q^{\otimes(N-n)} \ \gamma_{1, N} \ \mathds{1}_{\mcH_0}\otimes P^{\otimes n} \otimes Q^{\otimes(N-n)}}^{(1, k)}\delta_{\frac{n}{N}}
    \end{align*}
    satisfies
    \begin{align}
        \forall k \in \mbN,\ 
        \norm{\mathds{1}_{\mcH_0} \otimes P^{\otimes k} \ \gamma_{1, N}^{(1, k)} \ \mathds{1}_{\mcH_0} \otimes P^{\otimes k} - \intr_0^1 \lambda^k \, dM_{N, k}(\lambda)}_{\mcL^1}
            \le \frac{k(k-1)}{N}. \label{eq:loc limit}
    \end{align}
\end{proposition}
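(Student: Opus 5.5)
The plan is to compute $\mathds{1}_{\mcH_0}\otimes P^{\otimes k}\,\gamma_{1, N}^{(1, k)}\,\mathds{1}_{\mcH_0}\otimes P^{\otimes k}$ exactly, by inserting $\mathds{1}_{\mcH_s} = P + Q$ on each of the $N-k$ shell variables that are traced out. For each $n \in \intint{k, N}$ set
\begin{align*}
    G_n \coloneq \prth{\mathds{1}_{\mcH_0}\otimes P^{\otimes n} \otimes Q^{\otimes(N-n)} \ \gamma_{1, N} \ \mathds{1}_{\mcH_0}\otimes P^{\otimes n} \otimes Q^{\otimes(N-n)}}^{(1, k)} \ge 0,
\end{align*}
so that $M_{N,k} = \sum_{n=k}^N \binom{N}{n} G_n \delta_{n/N}$. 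I will then compare the exact expansion with $\int_0^1 \lambda^k \, dM_{N,k}(\lambda)$ term by term.

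\textbf{Step 1: exact expansion.} Inside the partial trace over the last $N-k$ shell variables, write $\mathds{1}^{\otimes(N-k)} = (P+Q)^{\otimes(N-k)}$ on both sides of $\gamma_{1,N}$. Cyclicity of the partial trace, together with $PQ = 0$, kills every cross term in which a traced variable carries $P$ on one side and $Q$ on the other. The surviving terms are indexed by the set of traced variables carrying $P$. By the symmetry of $\gamma_{1,N}$ in the shell variables, each such term equals $G_n$, where $n-k$ is the number of traced $P$'s. This gives
\begin{align*}
    \mathds{1}_{\mcH_0}\otimes P^{\otimes k}\,\gamma_{1, N}^{(1, k)}\,\mathds{1}_{\mcH_0}\otimes P^{\otimes k} = \sum_{n=k}^N \binom{N-k}{n-k} G_n = \sum_{n=k}^N c_{n} \binom{N}{n} G_n,
\end{align*}
with
\begin{align*}
    c_n \coloneq \frac{\binom{N-k}{n-k}}{\binom{N}{n}} = \prod_{j=0}^{k-1}\frac{n-j}{N-j}.
\end{align*}

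\textbf{Step 2: normalization.} The same expansion with $k=0$ and no projection gives $\sum_{n=0}^N \binom{N}{n}\Tr{G_n} = \Tr{\gamma_{1,N}} = 1$. Since each $G_n$ is positive, the trace norm of the difference in \eqref{eq:loc limit} is bounded by
\begin{align*}
    \sum_{n=k}^N \abs{\prth{\frac{n}{N}}^k - c_n}\binom{N}{n}\Tr{G_n} \le \sup_{k \le n \le N}\abs{\prth{\frac{n}{N}}^k - c_n}.
\end{align*}

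\textbf{Step 3: elementary estimate.} For $n \le N$ we have $\frac{n-j}{N-j} \le \frac{n}{N}$, so $0 \le \prth{\frac{n}{N}}^k - c_n$. Telescoping the difference of the two products of $k$ factors lying in $[0,1]$ gives
\begin{align*}
    \prth{\frac{n}{N}}^k - c_n \le \sum_{j=0}^{k-1}\prth{\frac{n}{N} - \frac{n-j}{N-j}} = \sum_{j=0}^{k-1}\frac{j\,(N-n)}{N(N-j)} \le \sum_{j=0}^{k-1}\frac{j}{N} = \frac{k(k-1)}{2N} \le \frac{k(k-1)}{N}.
\end{align*}
The middle bound uses $N - n \le N - j$ for $j \le k-1 < n$; the terms with $n < k$ do not appear, since $c_n = 0$ there and $M_{N,k}$ starts at $n=k$. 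This proves \eqref{eq:loc limit}.

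The main obstacle is Step 1: one has to justify rigorously, in the infinite-dimensional trace-class setting, that the mixed $P$/$Q$ cross terms vanish under the partial trace and that the combinatorial count $\binom{N-k}{n-k}$ is correct. I would handle this by testing against $A \otimes B$ with $A \in \mcK(\mcH_0)$ and $B$ compact on the $k$ shell variables, as in the proof of Theorem~\ref{th:finite DF}, and by using the invariance of $\gamma_{1,N}$ under the permutation unitaries of the shell.
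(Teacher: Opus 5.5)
Your proposal is correct and follows essentially the same route as the paper. You expand $(P+Q)^{\otimes(N-k)}$ over the traced shell variables, using the symmetry of $\gamma_{1,N}$, to obtain the exact identity with coefficients $\binom{N-k}{n-k}$; you then compare these with $\binom{N}{n}(n/N)^k$ and use that $\sum_n \binom{N}{n}\Tr{G_n} \le 1$. The only difference is the elementary estimate: your telescoping bound gives the sharper constant $k(k-1)/(2N)$, whereas the paper uses $c_n \ge \bigl(\frac{n-k+1}{N}\bigr)^k$ together with Bernoulli's inequality.
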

\begin{proof}
    Using the symmetry of $\gamma_{1, N}$ with respect to its last $N$ variables, we may apply the commutative binomial formula and observe that, for $k \in \intint{0, N}$,
    \begin{align*}
        \gamma_{1, N}^{(1, k)} 
            =&\ \prth{\mathds{1}_{\mcH_0}\otimes \mathds{1}_{\mcH_s}^{\otimes k}\otimes (P+Q)^{\otimes(N-k)} \ \gamma_{1, N}}^{(1, k)} \\
            =&\ \sum_{n=0}^{N-k} \prth{\matrx{N-k \\ n}} \prth{\mathds{1}_{\mcH_0}\otimes \mathds{1}_{\mcH_s}^{\otimes k}\otimes P^{\otimes n} \otimes Q^{\otimes(N-k-n)} \ \gamma_{1, N}}^{(1, k)} \\ 
            =&\ \sum_{n=k}^N \prth{\matrx{N-k \\ n-k}} \prth{\mathds{1}_{\mcH_0}\otimes \mathds{1}_{\mcH_s}^{\otimes k}\otimes P^{\otimes(n-k)} \otimes Q^{\otimes(N-n)} \ \gamma_{1, N}}^{(1, k)}.
    \end{align*}
    Therefore,
    \begin{align}
        &\mathds{1}_{\mcH_0} \otimes P^{\otimes k} \ \gamma_{1, N}^{(1, k)} \ \mathds{1}_{\mcH_0} \otimes P^{\otimes k}\nonumber \\
            =&\ \sum_{n=k}^N \prth{\matrx{N-k \\ n-k}} \prth{\mathds{1}_{\mcH_0}\otimes P^{\otimes n} \otimes Q^{\otimes(N-n)} \ \gamma_{1, N} \ \mathds{1}_{\mcH_0}\otimes P^{\otimes n} \otimes Q^{\otimes(N-n)}}^{(1, k)}. \label{eq:loc N}
    \end{align}
    Combining \eqref{eq:loc N} with the definition of $M_{N,k}$ yields
    \begin{align}
        &\norm{\mathds{1}_{\mcH_0} \otimes P^{\otimes k} \ \gamma_{1, N}^{(1, k)} \ \mathds{1}_{\mcH_0} \otimes P^{\otimes k} - \intr_0^1 \lambda^k \, dM_{N, k}(\lambda)}_{\mcL^1} \nonumber \\
            \le&\ \sum_{n=k}^N \abs{\prth{\matrx{N-k \\ n-k}}- \prth{\matrx{N \\ n}}\prth{\frac{n}{N}}^k}\Tr{\mathds{1}_{\mcH_0}\otimes P^{\otimes n} \otimes Q^{\otimes(N-n)} \ \gamma_{1, N}}. \label{eq:loc estimate}
    \end{align}
    Assume that $k \ge 1$ and let $n \in \intint{k, N}$. We estimate
    \begin{align*}
        \prth{\matrx{N \\ n}}^{-1}\prth{\matrx{N-k \\ n-k}}
            =&\ \frac{n! (N-k)!}{N!(n-k)!}
            = \prod_{j=0}^{k-1} \frac{n-j}{N-j}
            \ge \prth{\frac{n-(k-1)}{N}}^k \\
            =&\ \prth{\frac{n}{N}}^k\prth{1-\frac{k-1}{n}}^k
            \ge \prth{\frac{n}{N}}^k \prth{1 - \frac{k(k-1)}{n}}.
    \end{align*}
    Hence,
    \begin{align*}
        0 
            \le \prth{\frac{n}{N}}^k - \prth{\matrx{N \\ n}}^{-1}\prth{\matrx{N-k \\ n-k}}
            \le \prth{\frac{n}{N}}^k \frac{k(k-1)}{n}
            \le \frac{k(k-1)}{N}.
    \end{align*}
    If $k=0$, the right-hand side in \eqref{eq:loc estimate} vanishes. Otherwise, it is bounded by
    \begin{align*}
        &\norm{\mathds{1}_{\mcH_0} \otimes P^{\otimes k} \ \gamma_{1, N}^{(1, k)} \ \mathds{1}_{\mcH_0} \otimes P^{\otimes k} - \intr_0^1 \lambda^k \, dM_{N, k}(\lambda)}_{\mcL^1} \\
            \le&\ \frac{k(k-1)}{N}\sum_{n=k}^N \prth{\matrx{N \\ n}}\Tr{\mathds{1}_{\mcH_0}\otimes P^{\otimes n} \otimes Q^{\otimes(N-n)} \ \gamma_{1, N}}
            \le \frac{k(k-1)}{N}.
    \end{align*}
\end{proof}

We are now ready to prove Theorem~\ref{th:QDF}. Let us denote by $S_{\mcH_s}$ the sphere in $\mcH_s$. The core idea is that $P_m$ acts on the symmetric variables, thereby reducing the statement to the finite-dimensional case.

\begin{proof}[Proof of Theorem~\ref{th:QDF}]
    Let $N, m\in\mbN$, $k\in \intint{0,N}$, and with the usual notation,
    \begin{align*}
        Q_m \coloneq \mathds{1}_{\mcH_s} - P_m.
    \end{align*}
    In view of applying Proposition \ref{prop:loc} to $\gamma^{(1, N)}$, we define
    \begin{align}
        M_{N, k, m} \coloneq \sum_{n=k}^N \underbrace{\prth{\matrx{N \\ n}}\prth{\mathds{1}_{\mcH_0}\otimes P_m^{\otimes n} \otimes Q_m^{\otimes(N-n)} \ \gamma^{(1, N)} \ \mathds{1}_{\mcH_0}\otimes P_m^{\otimes n} \otimes Q_m^{\otimes(N-n)}}^{(1, k)}}_{\coloneq \eta_{N, k} \in \mcL^1_+\prth{\mcH_0 \otimes (P_m \mcH_s)^{\otimes_+ k}}}\delta_{\frac{n}{N}}. \label{eq:Mnkm}
    \end{align}

\subProof{Taking $N\to\infty$}

    $\prth{M_{N, k, m}}_{N\in\mbN}$ is bounded in $\mcM\prth{\sbra{0, 1}, \mcL^1_+\prth{\mcH_0 \otimes (P_m \mcH_s)^{\otimes_+ k}}}$:
    \begin{align}
        \Tr{M_{N, k, m}(\sbra{0, 1})} 
            =& \sum_{n=k}^N \Tr{\eta_{N, k}}
            = \sum_{n=k}^N \prth{\matrx{N \\ n}}\Tr{\mathds{1}_{\mcH_0}\otimes P^{\otimes n} \otimes Q^{\otimes(N-n)} \ \gamma^{(1, N)}} \nonumber\\
            \le&\ \sum_{n=0}^N \prth{\matrx{N \\ n}}\Tr{\mathds{1}_{\mcH_0}\otimes P^{\otimes n} \otimes Q^{\otimes(N-n)} \ \gamma^{(1, N)}}
            = \Tr{\gamma^{(1, N)}}
            = 1. \label{eq:Tr sum eta}
    \end{align}
    Hence, after extraction,
    \begin{align}
        M_{N, k, m} \wslim\displaylimits_{N\to\infty} M_{k, m} \in \mcM\prth{\sbra{0, 1}, \mcL^1_+\prth{\mcH_0 \otimes (P_m \mcH_s)^{\otimes_+ k}}}. \label{eq:Mnkm w* lim}
    \end{align}
    Let $K \in \mcK\prth{\mcH_0 \otimes (P_m \mcH_s)^{\otimes_+ k}}$. Then $\lambda \mapsto \lambda^k K \in C^0\prth{\sbra{0, 1}, \mcK\prth{\mcH_0 \otimes (P_m \mcH_s)^{\otimes_+ k}}}$, and thus
    \begin{align*}
        \Tr{K \intr_0^1 \lambda^k dM_{N, k, m}(\lambda)}
        \cvgto_{N\to\infty} \Tr{K\intr_0^1 \lambda^k dM_{k, m}(\lambda)}.
    \end{align*}
    It follows that
    \begin{align*}
        \intr_0^1\lambda^k dM_{N, k, m}(\lambda) \wslim\displaylimits_{N\to\infty}\intr_0^1 \lambda^k dM_{k, m}(\lambda) \in \mcL^1\prth{\mcH_0 \otimes (P_m \mcH_s)^{\otimes_+ k}}.
    \end{align*}
    Together with \eqref{eq:loc limit}, we obtain
    \begin{align}
        \mathds{1}_{\mcH_0} \otimes P_m^{\otimes k} \ \gamma^{(1, k)} \ \mathds{1}_{\mcH_0} \otimes P_m^{\otimes k} = \intr_0^1 \lambda^k dM_{k, m}(\lambda), \label{eq:localization}
    \end{align}
    since $\prth{\gamma^{(1, N)}}^{(1, k)} = \gamma^{(1, k)}$ is now independent of $N$.

\subProof{Consistency of $M_{k, m}$}

    We can canonically extend Definitions \ref{def:mult bosons} and \ref{def:inf bosons states} to trace-class valued measures. Since $P_m \mcH_s$ is finite-dimensional, we can trace out one coordinate in \eqref{eq:Mnkm w* lim} to obtain
    \begin{align}
        M_{N, k+1, m}^{(1, k)} \wslim\displaylimits_{N\to\infty} M_{k+1, m}^{(1, k)}. \label{eq:k+1 cvg}
    \end{align}
    Following from \eqref{eq:Mnkm},
    \begin{align}
        M_{N, k+1, m}^{(1, k)} = M_{N, k, m} -  \eta_{N, k} \delta_{\frac{k}{N}}. \label{eq:approx cstc}
    \end{align}
    The estimate \eqref{eq:Tr sum eta} implies that $\norm{\eta_{N, k}}_{\mcL^1} = \Tr{\eta_{N, k}} \le 1$, so we can extract a limit
    \begin{align*}
        \eta_{N, k} \wslim\displaylimits_{N\to\infty} \eta_k \in \mcL^1_+\prth{\mcH_0 \otimes (P_m \mcH_s)^{\otimes_+ k}}.
    \end{align*}
    Let $\varphi \in C^0\prth{\sbra{0, 1}, \mcK\prth{\mcH_0 \otimes (P_m \mcH_s)^{\otimes_+ k}}}$. Then
    \begin{align*}
        \intr_0^1 \Tr{\eta_{N, k} \varphi(\lambda)}d\delta_{\frac{k}{N}}
            =&\ \Tr{\eta_{N, k} \varphi\prth{\frac{k}{N}}}
            = \Tr{\eta_{N, k} \varphi(0)} + \Tr{\eta_{N, k} \prth{\varphi\prth{\frac{k}{N}} - \varphi(0)}} \\
            \cvgto_{N\to\infty}&\ \Tr{\eta_k \varphi(0)},
    \end{align*}
    since $\varphi(0)$ is compact and $\norm{\varphi\prth{\frac{k}{N}} - \varphi(0)}_{\mcL^\infty} \cvgto\limits_{N\to\infty} 0$ by continuity of $\varphi$. We deduce that
    \begin{align*}
        \eta_{N, k} \delta_{\frac{k}{N}} \wslim\displaylimits_{N\to\infty} \eta_k \delta_0 \in \mcM\prth{\sbra{0, 1}, \mcL^1_+\prth{\mcH_0 \otimes (P_m \mcH_s)^{\otimes_+ k}}}.
    \end{align*}
    Together with \eqref{eq:Mnkm w* lim} and \eqref{eq:k+1 cvg}, taking $N \to\infty$ in \eqref{eq:approx cstc} yields
    \begin{align}
        M_{k+1, m}^{(1, k)} = M_{k, m} -  \eta_{k} \delta_0. \label{eq:almost cst}
    \end{align}
    Keeping in mind that $M_{k, m}$ will be integrated against $\lambda^k$ (see \eqref{eq:localization}), we do not care about the mass at $0$. Hence, let
    \begin{align*}
        \forall B \in \mcB(\sbra{0, 1}),\ \widetilde{M}_{k, m}(B) = M_{k, m}\prth{B\backslash \sett{0}} + M_{0, m}\prth{B \cap \sett{0}}\otimes \prth{\frac{\mathds{1}_{\mbC^{m+1}}}{m+1}}^{\otimes k},
    \end{align*}
    so that $\widetilde{M}_{0, m} = M_{0, m}$ and, with \eqref{eq:almost cst} and then \eqref{eq:Tr sum eta},
    \begin{align}
        \Tr{\widetilde{M}_{k, m}[0, 1]} 
            =&\ \Tr{M_{k, m}\prth{(0, 1]}} + \Tr{M_{0, m}\prth{\sett{0}}}
            = \Tr{M_{0, m}\prth{(0, 1]}} + \Tr{M_{0, m}\prth{\sett{0}}} \nonumber \\
            =&\ \Tr{M_{0, m}\prth{[0, 1]}}
            \le \liminf_{N\to\infty} \Tr{M_{N, k, m}(\sbra{0, 1})}
            \le 1, \label{eq:Mkm mass}
    \end{align}
    and $(\widetilde{M}_{k, m})_{k\in \mbN}$ is consistent in $k$:
    \begin{align}
         \widetilde{M}_{k+1, m}^{(1, k)} = \widetilde{M}_{k, m}. \label{eq:cstc tilde}
    \end{align}

\subProof{Applying Theorem~\ref{th:QDFm} to $M_{k, m}$}

    By the Radon--Nikodym theorem, there exists $\gamma_{k, m} \in L^1\prth{[0, 1], \mcL^1_+\prth{\mcH_0 \otimes (P_m \mcH_s)^{\otimes_+ k}}}$ such that
    \begin{align}
        \widetilde{M}_{k, m} 
            = \gamma_{k, m} \Tr{\widetilde{M}_{k, m}}
            = \gamma_{k, m} \Tr{\widetilde{M}_{0, m}}, \label{eq:Mkm dev}
    \end{align}
    with $\Tr{\widetilde{M}_{k, m}}$-a.e.\ $\Tr{\gamma_{k, m}} = 1$. Note that the consistency equation \eqref{eq:cstc tilde} implies that $\Tr{\widetilde{M}_{k, m}}$ is independent of $k$. Hence,
    \begin{align*}
        \gamma_{k+1, m}^{(1, k)} \Tr{\widetilde{M}_{0, m}} = \gamma_{k, m} \Tr{\widetilde{M}_{0, m}},
    \end{align*}
    and $(\gamma_{k, m})_{k\in \mbN}$ is consistent $\Tr{\widetilde{M}_{0, m}}$-a.e. Then, for $\Tr{\widetilde{M}_{0, m}}$-a.e.\ $\lambda \in \sbra{0, 1}$, we can apply Theorem~\ref{th:QDFm} to $\prth{\gamma_{k, m}(\lambda)}_{k\in \mbN}$: there exist
    \begin{itemize}
        \item a probability measure $\mbP_{\lambda, m} \in \mcM\prth{S^m,\mbR_+}$,
        \item a function $\zeta_m(\lambda, \bullet) \in L^1\prth{S^m, \mcL^1_+(\mcH_0)}$ satisfying, $\mbP_{\lambda, m}$-a.e., $\Tr{\zeta_m(\lambda, \bullet)} = 1$,
    \end{itemize}
    such that
    \begin{align}
        \forall k\in \mbN,\ \gamma_{k, m}(\lambda) = \intr_{S^m} \zeta_m(\lambda, u)\otimes p_u^{\otimes k} \ d\mbP_{\lambda, m}(u). \label{eq:DF lambda}
    \end{align}
    Here $S^m$ stands for the $m$-dimensional sphere in $P_m \mcH_s$. From \eqref{eq:localization}, \eqref{eq:Mkm dev} and then \eqref{eq:DF lambda}, we infer that
    \begin{align}
        \mathds{1}_{\mcH_0} \otimes P_m^{\otimes k} \ \gamma^{(1, k)} \ \mathds{1}_{\mcH_0} \otimes P_m^{\otimes k} 
            =&\ \intr_0^1 \lambda^k dM_{k, m}(\lambda) 
            = \intr_0^1 \lambda^k d\widetilde{M}_{k, m}(\lambda) 
            = \intr_0^1 \lambda^k \gamma_{k, m}(\lambda) d\Tr{\widetilde{M}_{0, m}}(\lambda) \nonumber \\
            =&\ \intr_0^1 \lambda^k \prth{\ \intr_{S^m} \zeta_m(\lambda, u)\otimes p_u^{\otimes k}  \ d\mbP_{\lambda, m}(u)} d\Tr{\widetilde{M}_{0, m}}(\lambda). \label{eq:def measure}
    \end{align}

\subProof{Construction of a measure on $\sbra{0, 1}\times S^m$}

    We observe that, by testing \eqref{eq:def measure} against $\mathds{1}_{\mcH_0} \otimes \bigotimes_{i=1}^k A_i$ for $A_{1:k} \in \mcL\prth{\mbC^{m+1}}$ self-adjoint and $k \in \mbN$, we find, by density (Stone--Weierstrass theorem) of the algebra generated by functions of the form
    \begin{align*}
        \lambda, u \mapsto \lambda^k \prod_{i=1}^k \bk{u}{A_i u} \in C^0\prth{\sbra{0, 1}\times \quotient{S^m}{S^1}, \mbR},
    \end{align*}
    that we can define $\mbP_m \coloneq \mbP_{\bullet, m} \otimes \Tr{\widetilde{M}_{0, m}}$ as a measure on $\sbra{0, 1} \times S^m$, which is uniquely determined on measurable rectangles by
    \begin{align*}
        \forall A \in \mcB(\sbra{0, 1}),\ B \in \mcB(S^m),\ 
        \mbP_m(A \times B) \coloneq \intr_{\sbra{0, 1}} \mbP_{\lambda, m}(B) \, d\Tr{\widetilde{M}_{0, m}}(\lambda).
    \end{align*}
    This measure is finite, due to \eqref{eq:Mkm mass}:
    \begin{align}
        \mbP_m(\sbra{0, 1} \times S^m) 
            = \intr_{\sbra{0, 1}} d\Tr{\widetilde{M}_{0, m}}(\lambda) 
            = \Tr{\widetilde{M}_{0, m}(\sbra{0, 1})}
            \le 1. \label{eq:Pm mass}
    \end{align}
    We can extend $\mbP_m$ by $0$ to a measure on $\sbra{0, 1} \times S_{\mcH_s}$ since $S^m \subseteq P_m \mcH_s$. In particular,
    \begin{align}
        \mathds{1}_{\mcH_0} \otimes P_m^{\otimes k} \ \gamma^{(1, k)} \ \mathds{1}_{\mcH_0} \otimes P_m^{\otimes k} 
            = \iintr_{\sbra{0, 1} \times S_{\mcH_s}} \lambda^k \zeta_m(\lambda, u) \otimes p_u^{\otimes k} \, d\mbP_{m}(\lambda, u), \label{eq:finite m DF}
    \end{align}
    with $\mbP_m$-a.e., $\Tr{\zeta_m} = 1$. As a consequence of the above density argument and the $S^1$-invariance of Theorem~\ref{th:QDFm}, $\zeta_m \mbP_{m}$ is the unique $S^1$-invariant trace-class-valued measure satisfying \eqref{eq:finite m DF}.

\subProof{Taking $m\to\infty$}

    It follows from this uniqueness property that $(\zeta_m \mbP_m)_{m\in \mbN}$ is consistent in the following sense
    \begin{align*}
        \forall m \in \mbN,\ \zeta_m \mbP_m = {P_m}_* \prth{\zeta_{m+1} \mbP_{m+1}}
    \end{align*}
    and a bounded sequence in $\mcM\prth{\sbra{0, 1} \times S_{\mcH_s}, \mcL^1\prth{\mcH_0}}$. Here the Kolmogorov extension theorem holds. For this we refer to \cite[Chapters V.1 and V.2]{VectorMeasures}, key ingredients are the Radon--Nikodym property of the trace-class and $\mbP_m$-a.e., $\Tr{\zeta_m} = 1$. Therefore the Kolmogorov extension theorem constructs a trace-class valued measure $\zeta \mbP$, with
    \begin{itemize}
        \item $\mbP \in \mcM\prth{\sbra{0, 1} \times S_{\mcH_s},\mbR_+}$, satisfying $\mbP\prth{\sbra{0, 1} \times S_{\mcH_s}} \le 1$,
        \item $\zeta \in L^1\prth{\sbra{0, 1} \times S_{\mcH_s}, \mcL^1_+(\mcH_0)}$ satisfying, $\mbP$-a.e., $\Tr{\zeta} = 1$, whose existence is guaranteed by the Radon--Nikodym theorem,
    \end{itemize}
    such that 
    \begin{align*}
        \forall m \in \mbN,\ \zeta_m \mbP_m = {P_m}_* \prth{\zeta \mbP}.
    \end{align*}

    Let $K \in \mcK\prth{\mcH_0 \otimes \mcH_s^{\otimes_+ k}}$. Since
    \begin{align*}
        \varphi: (\lambda, u) \mapsto \lambda^k\prth{\mathds{1}_{\mcH_0} \otimes p_u^{\otimes k}\ K}^{(1, 0)} \in C^0\prth{\sbra{0, 1} \times S_{\mcH_s}, \mcK\prth{\mcH_0}},
    \end{align*}
    we obtain
    \begin{align*}
        \iintr_{\sbra{0, 1} \times S_{\mcH_s}} \lambda^k\Tr{\zeta_m(\lambda, u)\otimes p_u^{\otimes k} \ K}  d\mbP_{m}(\lambda, u)
            =&\ \iintr_{\sbra{0, 1} \times S_{\mcH_s}} \Tr{\zeta_m(\lambda, u) \varphi(\lambda, u)}  d\mbP_{m}(\lambda, u) \\
            \cvgto_{m\to\infty}&\ \iintr_{\sbra{0, 1} \times S_{\mcH_s}} \Tr{\zeta(\lambda, u) \varphi(\lambda, u)}  d\mbP(\lambda, u) \\
            =&\ \iintr_{\sbra{0, 1} \times S_{\mcH_s}} \lambda^k\Tr{\zeta(\lambda, u)\otimes p_u^{\otimes k} \ K}  d\mbP(\lambda, u),
    \end{align*}
    from which we deduce that
    \begin{align}
        \iintr_{\sbra{0, 1} \times S_{\mcH_s}} \lambda^k\zeta_m(\lambda, u)\otimes p_u^{\otimes k}  d\mbP_{m}(\lambda, u) 
            \wslim\displaylimits_{m\to\infty} \iintr_{\sbra{0, 1} \times S_{\mcH_s}} \lambda^k\zeta(\lambda, u)\otimes p_u^{\otimes k}  d\mbP(\lambda, u). \label{eq:finite m DF limit}
    \end{align}

\subProof{Estimating the tail of $\mathds{1}_{\mcH_0} \otimes P_m^{\otimes k} \ \gamma^{(1, k)} \ \mathds{1}_{\mcH_0} \otimes P_m^{\otimes k}$}

    By induction,
    \begin{align*}
        \mathds{1}_{\mathcal{H}_s}^{\otimes k} - P_m^{\otimes k} 
            = \sum_{n=1}^{k}
        P_m^{\otimes (n-1)} \otimes Q_m \otimes \mathds{1}_{\mathcal{H}_s}^{\otimes (k-n)}.
    \end{align*}
    Indeed, the case $k=1$ corresponds to the definition of $Q_m$, and assuming the above identity, we obtain
    \begin{align*}
        \mathds{1}_{\mathcal{H}_s}^{\otimes (k+1)} - P_m^{\otimes (k+1)} 
            &= P_m \otimes \prth{ \mathds{1}_{\mathcal{H}_s}^{\otimes k} - P_m^{\otimes k}} + Q_m \otimes \mathds{1}_{\mathcal{H}_s}^{\otimes k} \\
            &= \sum_{n=1}^{k}
        P_m^{\otimes n} \otimes Q_m \otimes \mathds{1}_{\mathcal{H}_s}^{\otimes (k-n)} + Q_m \otimes \mathds{1}_{\mathcal{H}_s}^{\otimes k}
            = \sum_{n=0}^{k}
        P_m^{\otimes n} \otimes Q_m \otimes \mathds{1}_{\mathcal{H}_s}^{\otimes (k-n)} \\
            &= \sum_{n=1}^{k+1}
        P_m^{\otimes (n-1)} \otimes Q_m \otimes \mathds{1}_{\mathcal{H}_s}^{\otimes (k+1-n)}.
    \end{align*}
    Consequently,
    \begin{align*}
        &\norm{\mathds{1}_{\mathcal{H}_0}\otimes (\mathds{1}_{\mathcal{H}_s}^{\otimes k} - P_m^{\otimes k})\ \gamma^{(1, k)}}_{\mathcal{L}^1}
            \le \sum_{n=1}^k \norm{\mathds{1}_{\mathcal{H}_0} \otimes P_m^{\otimes (n-1)} \otimes Q_m \otimes \mathds{1}_{\mathcal{H}_s}^{\otimes (k-n)} \ \gamma^{(1, k)}}_{\mathcal{L}^1} \\
            &\le \sum_{n=1}^k \Tr{\mathds{1}_{\mathcal{H}_0} \otimes \mathds{1}_{\mathcal{H}_s}^{\otimes (n-1)} \otimes Q_m \otimes \mathds{1}_{\mathcal{H}_s}^{\otimes (k-n)} \ \gamma_{1, k}}^{\frac{1}{2}} \Tr{\mathds{1}_{\mathcal{H}_0} \otimes P_m^{\otimes (n-1)} \otimes \mathds{1}_{\mathcal{H}_s}^{\otimes (k+1-n)} \ \gamma_{1, k}}^{\frac{1}{2}} \\
            &\le \sum_{n=1}^k \Tr{\mathds{1}_{\mathcal{H}_0} \otimes \mathds{1}_{\mathcal{H}_s}^{\otimes (n-1)} \otimes Q_m \otimes \mathds{1}_{\mathcal{H}_s}^{\otimes (k-n)} \ \gamma_{1, k}}^{\frac{1}{2}} 
            = k \Tr{Q_m \gamma^{(0, 1)}}^{\frac{1}{2}} \cvgto_{m\to\infty} 0.
    \end{align*}
    Developing the square, it follows from the Cauchy--Schwarz inequality that
    \begin{align*}
        \norm{\gamma^{(1, k)} - \mathds{1}_{\mcH_0} \otimes P_m^{\otimes k} \ \gamma^{(1, k)} \ \mathds{1}_{\mcH_0} \otimes P_m^{\otimes k}}_{\mcL^1}
            \le&\ \norm{\mathds{1}_{\mathcal{H}_0}\otimes (\mathds{1}_{\mathcal{H}_s}^{\otimes k} - P_m^{\otimes k})\ \gamma^{(1, k)} \ \mathds{1}_{\mathcal{H}_0}\otimes (\mathds{1}_{\mathcal{H}_s}^{\otimes k} - P_m^{\otimes k})}_{\mathcal{L}^1} \\
            &+ 2 \norm{\mathds{1}_{\mathcal{H}_0}\otimes (\mathds{1}_{\mathcal{H}_s}^{\otimes k} - P_m^{\otimes k})\ \gamma^{(1, k)} \ \mathds{1}_{\mathcal{H}_0}\otimes P_m^{\otimes k}}_{\mathcal{L}^1} \\
            \cvgto_{m\to\infty} 0.
    \end{align*}

\subProof{Conclusion}
    
    Combining the above with \eqref{eq:finite m DF} and \eqref{eq:finite m DF limit}, we obtain
    \begin{align*}
        \gamma^{(1, k)} = \iintr_{\sbra{0, 1} \times S_{\mcH_s}}  \lambda^k \zeta(\lambda, u)\otimes p_u^{\otimes k}  d\mbP(\lambda, u).
    \end{align*}
    Taking the trace yields
    \begin{align*}
        1 
            = \Tr{\gamma^{(1, k)}} 
            = \iintr_{\sbra{0, 1} \times S_{\mcH_s}}  \lambda^k d\mbP(\lambda, u).
    \end{align*}
    Since with \eqref{eq:Pm mass},
    \begin{align*}
        \mbP\prth{\sbra{0, 1} \times S_{\mcH_s}} 
            = \norm{\zeta \mbP}_{\mcM}
            \le \liminf_{m\to\infty} \norm{\zeta_m \mbP_m}_{\mcM}
            = \liminf_{m\to\infty} \mbP_m(\sbra{0, 1}\times S^m) 
            \le 1,
    \end{align*}
    we infer that $\mbP$ must be a probability measure supported on $\sett{1} \times S_{\mcH_s}$, and the pair
    \begin{align*}
        \mbP(\sett{1}\times \bullet), \zeta(1, \bullet)
    \end{align*}
    is suitable to conclude, as $\mbP(\sett{1}\times \bullet)$-a.e.\ $\Tr{\zeta(1, \bullet)} = 1$.

\end{proof}

\section*{Declarations}
\addcontentsline{toc}{section}{Declarations}

\noindent\textbf{Fundings and acknowledgments.} S.~Farhat and S.~Petrat acknowledge funding by the
Deutsche Forschungsgemeinschaft (DFG, German Research Foundation) - project number
505496137. D.~Périce and S.~Petrat acknowledge funding by the
Deutsche Forschungsgemeinschaft (DFG, German Research Foundation) - project number
512258249. This work was supported by the German Research Foundation (DFG) within the scientific network `A(E)MP - Appearance of the Effective Mass in Polaron Models´ (grant No. 569490025). 

We would like to thank the Institut Henri Poincaré (UAR 839 CNRS-Sorbonne Université) and the LabEx CARMIN (ANR-10-LABX-59-01) for hosting us during the thematic program on "Quantum many-body systems out-of-equilibrium", where some ideas of this paper were initially discussed. In particular, we would like to thank Thierry Giamarchi (one of the organizers of the thematic programme), for discussions on quantum lattice systems and DMFT.

\vspace{3mm}
\noindent\textbf{Data availability statement.} No datasets were generated or analysed during the current study.

\vspace{3mm}
\noindent\textbf{Conflicts of interests declaration.} The authors have no competing interests to declare that are relevant to the content of this article.

\addcontentsline{toc}{section}{References}
\bibliography{bibliography}

\end{document}